\newcommand*\patchAmsMathEnvironmentForLineno[1]{%
  \expandafter\let\csname old#1\expandafter\endcsname\csname #1\endcsname
  \expandafter\let\csname oldend#1\expandafter\endcsname\csname end#1\endcsname
  \renewenvironment{#1}%
     {\linenomath\csname old#1\endcsname}%
     {\csname oldend#1\endcsname\endlinenomath}}%
\newcommand*\patchBothAmsMathEnvironmentsForLineno[1]{%
  \patchAmsMathEnvironmentForLineno{#1}%
  \patchAmsMathEnvironmentForLineno{#1*}}%
\newtheorem{theorem}{Theorem}[section]
\newtheorem{lemma}[theorem]{Lemma}
\newtheorem{definition}[theorem]{Definition}
\newtheorem{proposition}[theorem]{Proposition}
\newtheorem{remark}[theorem]{Remark}
\newcommand{\GB}{\text{Gr\"{o}bner} }
\newcommand{\poly}{\mathrm{poly}}
\newcommand{\sos}{\text{\sc{SoS}}}
\newcommand{\x}{\mathbf{x}}
\newcommand{\MS}{\mathcal{S}}
\title{On the Bit Size of Sum-of-Squares Proofs for \\Symmetric Formulations \thanks{This work was funded by the Swiss National Science Foundation project No.200021\_207429 / 1 \textit{Ideal Membership Problems and the Bit Complexity of Sum of Squares Proofs}}} 
\date{}
\author{Alex Bortolotti\thanks{University of Applied Sciences and Arts of Southern Switzerland, IDSIA, Lugano, Switzerland. E-mail: \href{mailto:alex.bortolotti@supsi.ch}{\texttt{alex.bortolotti@supsi.ch}}, \href{mailto:marilena.palomba@supsi.ch}{\texttt{monaldo.mastrolilli@supsi.ch}},  \href{mailto:monaldo.mastrolilli@supsi.ch}{\texttt{marilena.palomba@supsi.ch}}, \href{mailto:luis.vargas@supsi.ch}{\texttt{luis.vargas@supsi.ch}}.} \  \and Monaldo Mastrolilli\footnotemark[2] \ \and Marilena Palomba\footnotemark[2]  \ \and Luis Felipe Vargas\footnotemark[2]
}
\begin{document}

\maketitle

\begin{abstract}
    The Sum-of-Squares ($\sos$) hierarchy is a powerful framework for polynomial optimization and proof complexity, offering tight semidefinite relaxations that capture many classical algorithms. 
    Despite its broad applicability, several works have revealed fundamental limitations to $\sos$ automatability.
    (i) While low-degree $\sos$ proofs are often desirable for tractability, recent works have revealed they may require coefficients of prohibitively large bit size, rendering them computationally infeasible.
    (ii) Prior works have shown that $\sos$ proofs for seemingly easy problems require high-degree. In particular, this phenomenon also arises in highly symmetric problems.
    Instances of symmetric problems--particularly those with a small number of constraints--have repeatedly served as benchmarks for establishing high-degree lower bounds in the $\sos$ hierarchy.
    It has remained unclear whether symmetry can also lead to large bit sizes in $\sos$ proofs, potentially making low-degree proofs computationally infeasible even in symmetric settings.
    
    In this work, we resolve this question by proving that symmetry alone does not lead to large bit size $\sos$ proofs. Focusing on symmetric Archimedean instances, we show that low-degree $\sos$ proofs for such systems admit compact, low bit size representations.
    Together, these results provide a conceptual separation between two sources of $\sos$ hardness--degree and bit size--by showing they do not necessarily align, even in highly symmetric instances. This insight guides future work on automatability and lower bounds: symmetry may necessitate high-degree proofs, but it does not by itself force large coefficients. 
\end{abstract}

\newpage

\section{Introduction}\label{sec:Introduction}

The Sum-of-Squares ($\sos$) hierarchy, also known as the Lasserre hierarchy \cite{Lasserre2001,Parrilo2000StructuredSP}, is one of the most powerful and broadly applicable frameworks for algorithm design and complexity analysis in polynomial optimization. It systematically generates increasingly tighter semidefinite programming (SDP) relaxations and subsumes many classical algorithms, see e.g. \cite{FlemingKothariPitassi19, Lasserre2001, Laurent2009,  Parrilo03}. Over the past two decades, $\sos$ has played a central role in advancing our understanding of both algorithmic upper and lower bounds and proof complexity. 
However, despite its generality, a growing body of work has uncovered inherent limitations of the hierarchy that has emerged in the last years. Indeed, it is only relatively recently that O'Donnell \cite{odonnell2017} and Raghavendra and Weitz \cite{raghavendra_weitz2017} have demonstrated that efficiently computable, i.e. in polynomial time, low-degree $\sos$ proofs might be impossible to obtain due to their inherently high-bit size.

It is by now well-understood that certain structural properties—such as high symmetry or compact constraint descriptions—can significantly influence the degree complexity of $\sos$. In particular, instances with a small number of symmetric constraints have often served as benchmarks in establishing $\sos$ high degree lower bounds \cite{Grigoriev01, GrigorievV01, KLM-unbounded-SOS-hierarchy-integrality-gap, KLM-hard-prob-formulation, KLM-tight-SOS-LB-binary-POP, KLM-SOS-hierarchy-LB-symmetric-formulations,kurpisz_et_al:SOS-certification-symmetric-quadratic-functions,laurent2003lower,potechin:SOS-LB-from-symmetry}. 
Examples of this kind include the (infeasible) \textsc{Knapsack} problem defined by $\{x \in \{0,1\}^n \ | \ \sum_{i=1}^n x_i = \frac{n}{2}\}$ with $n$ odd, for which Grigoriev showed that degree-$\Omega(\lfloor \frac{n}{2} \rfloor)$ $\sos$ proofs are necessary for certifying that the instance is unsatisfiable \cite{Grigoriev01}. In fact, symmetric problems have been shown to be among the most challenging in terms of the degree required by the 0/1 $\sos$ hierarchy \cite{KLM-hard-prob-formulation}: for instance, the symmetric problem \textsc{Min-Knapsack} exhibits an arbitrarily large integrality gap even at degree $n-1$.

In this work, we clarify and refine this understanding by addressing a fundamental and previously unresolved question: \emph{Does symmetry alone suffice to make $\sos$ hard due to high bit size?} More precisely, we focus on symmetric Archimedean instances defined by a polynomial number of constraints. Although certain special cases of symmetric Archimedean instances have previously appeared in high-degree lower bounds (see e.g. \cite{Grigoriev01}), the role of symmetry in the underlying source of $\sos$ hardness remains unclear, especially with respect to the bit size and the succinct representation of refutations \cite{Weitz:Phd}.

\textbf{Our contribution.} Our main contribution is to rigorously rule out a natural but previously open possibility: that symmetric instances could be hard for $\sos$ due to the bit size rather than degree. We show that this is not the case. Specifically, we demonstrate that low degree proofs of instances under symmetry conditions with a polynomial number of constraints have low bit sizes. This result provides a conceptual clarification of the role of symmetry in $\sos$ lower bounds: while symmetry can make $\sos$ fail at low degrees, it does not, in itself, force high-bit size solutions.

Our approach is based on representation simplification techniques that exploit structural properties of Archimedean systems and Gröbner bases. We use tools from convex geometry and polynomial ideal theory, together with concepts specific to $\sos$ proofs such as pseudoexpectations, to reduce the complexity of $\sos$ representations.
These algebraic simplifications are then combined with symmetry reductions: by leveraging finite group actions, we restrict the $\sos$ proof search to low-dimensional invariant subspaces. This results yield to $\sos$ proofs with ``small'' coefficients. 
Thus, any obstacle to solving these problems via $\sos$ cannot arise from high-bit sizes, but must be fundamentally combinatorial or algebraic in nature.

This insight has several implications. First, it separates two common sources of complexity in $\sos$--degree and bit size--by showing that they do not necessarily align, even in structured, highly symmetric cases. Second, it provides guidance for future lower-bound constructions: symmetry alone does not lead to hard-to-represent proofs, and so other mechanisms must be invoked when designing instances hard for $\sos$ in both degree and bit size. Finally, it reinforces the importance of degree as the primary complexity parameter in understanding the limitations of the $\sos$ hierarchy on symmetric instances.

\textbf{Related literature.}  Extensive research has explored the symmetric properties of $\sos$ in polynomial optimization \cite{BlekhermanR21, ChoiTR87, CimpriKS08, DebusR23, GATERMANN200495}. We refer to \cite{MoustrouRV23} for a thorough review on the topic. This literature primarily addresses algebraic aspects and implementation benefits, offering limited insight into computational complexity and no results concerning bit complexity analysis. A notable result is due to Riener et al. \cite{RienerTJL11}, who proved that the size of the matrix needed to find a low-degree sum-of-squares representation of an unconstrained homogeneous symmetric polynomial is independent from $n$. We emphasize that our setting is more general, allowing for the search, in symmetric frameworks, of $\sos$ proofs of nonhomogeneous polynomials subject to a nonempty set of polynomial constraints.

Early approaches to systematically study the degree automatability of the $\sos$ proof system leverage algebraic proof systems and their simulation by $\sos$. Raghavendra and Weitz~\cite{raghavendra_weitz2017} obtained a sufficient condition based on the \emph{Nullstellensatz} proof system, recently improved by Bortolotti et al. \cite{BortolottiMV25ICALP} who extended this to \emph{Polynomial Calculus} and introduced the first criterion for bounded-coefficient $\sos$ refutations.
Later progress has identified structured settings where $\sos$ relaxations remain tractable (here, $\sos$ relaxations refer to the semidefinite programming hierarchy that approximates polynomial optimization problems by searching for $\sos$ certificates of nonnegativity). Gribling et al. \cite{Gribling23} showed polynomial-time solvability under strong algebraic and geometric assumptions for systems with inequality constraints and full-dimensional feasibility. Palomba et al. \cite{palomba} independently showed that $\sos$ bounds for certain copositive programs can be computed efficiently.

\subsection{Technical overview}\label{sec:our-contribution}

\textbf{Preliminaries.} Let $\mathbb{R}[x_1,\ldots,x_n]$ denote the ring of $n$-variate real polynomials and let $\mathbb{R}[x_1,\ldots,x_n]_d$ be the vector space of polynomials of degree at most $d$. Further, we denote as $\Sigma$ the convex cone of polynomials that can be decomposed into a $\sos$ of polynomials, and we set $\Sigma_{2d} = \Sigma \cap \mathbb{R}[x_1, \ldots, x_n]_{2d}$. 

Let $\mathcal P = \{p_1=0, \dots p_m=0\}$ and $\mathcal{Q}=\{q_1\geq 0, \dots, q_{\ell}\geq 0\}$ be sets of polynomial equality and inequality constraints, respectively. We define the associated semialgebraic \emph{zero set} as $S = \{x \in \mathbb{R}^n \ | \ p_i(x)=0 \ \forall i \in [m] \text{ and } q_i(x) \geq 0 \ \forall j \in [\ell]\}$. Given a polynomial $r\in \mathbb R[x_1, \ldots, x_n]$, a \emph{sum-of-squares proof} of nonnegativity of $r$ over $S$ from $(\mathcal{P,Q})$ consists of an identity
\begin{equation}\label{eq:def_sos_proof}
    r = s_0 + \sum_{i=1}^m h_i p_i + \sum_{j=1}^{\ell} s_jq_j,
\end{equation}
where $s_0, s_1, \ldots, s_{\ell} \in \Sigma$ and $h_1, \ldots, h_m \in \mathbb{R}[x_1, \ldots, x_n]$. An $\sos$ proof of nonnegativity of the polynomial $r=-1$ from $(\mathcal{P},\mathcal{Q})$ is called an \emph{$\sos$ refutation} of $(\mathcal{P},\mathcal{Q})$; it certifies that the constraint set $\mathcal{P} \cup \mathcal{Q}$ is unsatisfiable. The \emph{degree} of the $\sos$ proof is the maximum degree of the polynomials appearing in \eqref{eq:def_sos_proof}, while the \emph{bit size} refers to the length of the binary representation of the proof under some standard encoding of rational coefficients. 
Furthermore, in what follows we assume the inputs $r, \mathcal{P}, \mathcal{Q}$ to have bit size polynomial in $n$. 

We are interested in understanding the automatability of $\sos$ proofs of a fixed degree $d\in O(1)$. The problem of finding a degree-$d$ $\sos$ proof can be formulated as a semidefinite program (SDP) of size $n^{O(d)}$, leveraging the well-known correspondence between $\sos$ polynomials and positive semidefinite (PSD) matrices (see, e.g., \cite{Laurent2009}). Based on this formulation, it has often been claimed that such feasibility SDPs can be solved in time $n^{O(d)}$ using the ellipsoid method.

However, in a recent work, O’Donnell \cite{odonnell2017} challenged this widely repeated claim. He constructed systems of polynomial inequalities with bounded coefficients for which all degree-2 $\sos$ certificates require doubly-exponential-sized coefficients. As a consequence, any $\sos$ proof must involve exponentially many bits, implying that the ellipsoid method will require exponential time to solve the corresponding SDP.

We aim to study whether a given triple $(r, \mathcal{P}, \mathcal{Q})$ satisfies the following property:
\begin{enumerate}
        \item[(P)] Assume there exists a degree-$d$ $\sos$ proof of $r$ from $(\mathcal{P}, \mathcal{Q})$ (as in (\ref{eq:def_sos_proof})). Then there exists another such proof of degree $d$ in which all coefficients are bounded by $2^{\poly(n^d)}$.
\end{enumerate}
As shown by O’Donnell \cite{odonnell2017} (see also \cite{Hakoniemi-PhD} for a more detailed exposition), property (P), together with the assumption that the constraint set $(\mathcal{P}, \mathcal{Q})$ is Archimedean, implies that $\sos$ proofs can be efficiently found. Specifically, if a degree-$d$ $\sos$ proof of $r$ exists, then for any rational $\varepsilon > 0$, one can efficiently compute a degree-$d$ $\sos$ proof of $r + \varepsilon$ from $(\mathcal{P}, \mathcal{Q})$ in time polynomial in $n$ and $\log(1/\varepsilon)$. We note that the additive error $\varepsilon$ arises from the numerical nature of semidefinite programming: the ellipsoid method can only determine the feasibility up to a small additive error. This is generally not considered problematic as the error can be tightly controlled.

\textbf{Our results.}
Although symmetry has been linked to high-degree lower bounds in $\sos$, we prove that it does not inherently cause large coefficients: symmetric systems admitting degree-$d$ proofs also admit representations with coefficients bounded by $2^{\poly(n^d)}$.
Specifically, in \cref{th:O(1)_symmetric_constraint_sos-proofs}, we show that if $G$ is a direct product of $O(1)$ symmetric groups, then for any $G$-invariant polynomial system $\mathcal{P} \cup \mathcal{F}$--with a polynomial number of equality constraints and $\mathcal{F}$ Gröbner basis--any degree-$2d$ $\sos$ proof admits a representation with coefficients bounded by $2^{\poly(n^d)}$.
In \cref{th:invariant_systems_refutation}, we establish a similar result for refutations: if $\mathcal{P}$ is a $G$-invariant system of polynomial equalities over a finite domain $\mathcal{D}$ and $\mathcal{P}\cup \mathcal{D}$ admits a degree-$2d$ $\sos$ refutation, then it also admits a degree-$2d$ refutation with coefficients bounded by $2^{\poly(n^d)}$.

With this aim, we first establish a structural result for Archimedean systems. We show that given an Archimedean pair $(\mathcal{P}, \mathcal{Q})$ and a set $\mathcal{R}$ of additional equality constraints, any degree-$2d$ $\sos$ refutation of $(\mathcal{P} \cup \mathcal{R}, \mathcal{Q})$ can be transformed into a degree-$O(d)$ refutation in \emph{normal form}, where each term $h_i r_i$, for $r_i \in \mathcal{R}$, can be assumed to take the form $\alpha_i r_i^2$ for scalars $\alpha_i \in \mathbb{R}$. This generalizes a result of Hakoniemi \cite{Hakoniemi-PhD}, originally proven for Boolean systems, to the broader Archimedean setting--i.e., systems where boundedness of the solution set can be $\sos$ certified. 
This normal form result will play a central role in the proof of our main results by enabling a precise control over the number of variables in the semidefinite programs characterizing $\sos$ proofs under symmetry. 
This is key to applying structural results such as \cref{th:porkolab-invariant-SDP+LP}, ultimately leading to polynomial bounds on the bit size of $\sos$ refutations, as established in \cref{th:invariant_systems_refutation}.

\textbf{Structure of the paper.}
In \cref{sec:sos-reductions}, we introduce reduction techniques for simplifying $\sos$ refutations over Archimedean systems. This section culminates in \cref{th:normal_form_archimedean_systems}, which establishes the normal form for $\sos$ refutations.  In \cref{sec:invariant-sos-and-finite-orbits}, we develop the symmetry framework by analyzing group actions on polynomials and bounding the number of resulting orbits. These bounds allow us to reduce the dimension of the semidefinite programs used to encode $\sos$ proofs. The main results are presented in \cref{sec:symmetric-case}, where we show that for systems with a polynomial number of constraints, under some symmetry assumptions, any low-degree $\sos$ proof or refutation can be rewritten with coefficients of polynomial bit size.

\section{Sums-of-Squares reductions}\label{sec:sos-reductions}

The focus of this section is on reduction techniques that exploit polynomial system structure for simplifying $\sos$ refutations.
We begin by introducing a normal form for $\sos$ refutations in the setting of Archimedean pairs. Recall that a pair $(\mathcal{P}, \mathcal{Q})$, where $\mathcal{P}$ is a set of polynomial equality constraints and $\mathcal{Q}$ is a set of polynomial inequality constraints, is Archimedean if there exists $N \in \mathbb{N}$ such that $N - \sum_{i=1}^n x_i^2$ has an $\sos$ proof from $(\mathcal{P},\mathcal{Q})$, which essentially implies that the associated semialgebraic set is ``provably'' bounded (see e.g. \cite{Laurent2009}).
We then focus on systems of polynomial equalities, demonstrating how reductions by a Gröbner basis provide a canonical representation for $\sos$ proofs modulo the ideal generated by the equalities. Crucially, we show how to convert a reduced proof back into a standard $\sos$ refutation.
These reduction techniques are essential tools for the analysis and proofs presented in the subsequent sections.

\subsection{$\sos$ refutations over Archimedean systems}

In \cite{Hakoniemi-PhD}, Hakoniemi shows an interesting structural property of $\sos$ refutations in the Boolean setting. For a system of polynomial equalities $\mathcal{P} = \{p_1 = 0, \ldots, p_m = 0\}$, alongside the Boolean constraints $x_i^2 - x_i =0$ for each variable $x_i$, any $\sos$ refutation initially expressed as $-1 = \sum s_i^2 + \sum h_i p_i + \sum r_i (x_i^2 - x_i)$, where $s_i,h_i,r_i$ are polynomials, can be shown to exhibit an alternative form, also called \emph{normal} form:
\begin{align*}
    -1 = \sum_{i=1}^t \tilde{s_i}^2 + \sum_{i=1}^m \alpha_i p_i^2 + \sum_{i=1}^n \tilde{r_i} (x_i^2 - x_i)
\end{align*}
where, notably, the coefficients $\alpha_i$ are scalars, i.e. $\alpha_i \in \mathbb{R}$.

This section extends Hakoniemi's work on $\sos$ refutations. We move beyond Boolean constraints to consider systems containing Archimedean pairs $(\mathcal{P},\mathcal{Q})$, a core concept in real algebraic geometry, in particular regarding Positivstellensatz results, and the moment-$\sos$ hierarchy (see also \cite{Laurent2009,Marshall2008}). Further, in \cref{sec:symmetric-case}, we will use normal forms to construct simpler $\sos$ refutations that allow us to bound their coefficients.

We begin by recalling some fundamental notions of convex sets in vector spaces, including the \emph{separation theorem for cones} (see e.g. \cite{BoydV04}).

\begin{definition}[Convex cones and order units]
    Let $V$ be an $\mathbb{R}$-vector space. A subset $C\subseteq V$ is called a \emph{convex cone} if $0\in C$, $C+C\subseteq C$ and $\mathbb{R}_+C\subseteq C$. We say that $C$ is proper if $C\neq V$.
    Furthermore, a point $u \in V$ is a \emph{order unit} for the convex cone $C$ (in $V$) if, for every $x\in V$, there exists some $N\in \mathbb{N}$ such that $Nu+x \in C$.
\end{definition}

\begin{theorem}[Isolation theorem for cones]\label{th:isola}
    Let $u$ be an order unit for the proper convex cone $C$ in the $\mathbb{R}$-vector space $V$. Then, there exists a linear functional $L: V\to \mathbb{R}$ such that $L(u)=1$ and $L(C)\subseteq \mathbb{R}_+$.    
\end{theorem}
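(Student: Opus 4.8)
The statement to prove is the Isolation theorem for cones (Theorem~\ref{th:isola}): given an order unit $u$ for a proper convex cone $C$ in an $\mathbb{R}$-vector space $V$, there is a linear functional $L\colon V\to\mathbb{R}$ with $L(u)=1$ and $L(C)\subseteq\mathbb{R}_+$. This is a classical Hahn--Banach type separation argument, and the plan is to carry it out in the purely algebraic setting (no topology on $V$), since $V$ here will be a space of polynomials of bounded degree and $C$ a cone built from sums of squares.

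\textbf{Step 1: Reduce to finding a maximal proper cone.} I would first pass to a cone $C'\supseteq C$ that is maximal among proper convex cones containing $C$ and not containing $-u$; equivalently, I would consider the family of proper convex cones $D\supseteq C$ with $-u\notin D$, ordered by inclusion, and apply Zorn's lemma. The key point to check for Zorn is that the union of a chain of such cones is again a proper convex cone avoiding $-u$: closure under addition and positive scaling is immediate for a chain, $0$ is in every member, and $-u$ is in none, so it is in the union; properness follows because $-u\notin D'$ forces $D'\neq V$. Note $-u\notin C$ itself, since if $-u\in C$ then for every $x\in V$ we would have $x = (Nu+x) + N(-u)\in C$ for suitable $N$ (using that $u$ is an order unit), contradicting properness of $C$. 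So the family is nonempty and Zorn applies.

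\textbf{Step 2: Show the maximal cone $C'$ is a ``half-space'', i.e. $C'\cup(-C')=V$.} This is the heart of the argument. Suppose some $y\in V$ has $y\notin C'$ and $-y\notin C'$. Consider the cone $C'+\mathbb{R}_+ y$ generated by $C'$ and $y$. By maximality this must contain $-u$, so $-u = c_1 + \lambda y$ with $c_1\in C'$, $\lambda>0$; similarly working with $-y$ gives $-u = c_2 + \mu(-y)$ with $c_2\in C'$, $\mu>0$ (here one uses that $y\notin C'$ resp. $-y\notin C'$ so the enlarged cones are strictly bigger, hence no longer avoid $-u$). Taking a suitable positive combination of these two identities to cancel the $y$ terms, one obtains $-u \cdot(\text{positive scalar}) \in C'$, hence $-u\in C'$ (cone), contradicting $-u\notin C'$. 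I expect this cancellation step to be the main obstacle to write cleanly — one has to be careful that the scalars come out with the right signs so that the combination genuinely lands in $C'$ and the coefficient of $-u$ is strictly positive; the order-unit hypothesis is what guarantees $u$ (hence $-u$) interacts well, but the precise bookkeeping needs attention.

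\textbf{Step 3: Extract the functional and normalize.} Once $C'\cup(-C')=V$ with $C'\cap(-C')$ a linear subspace $W$ (it is closed under addition and under multiplication by all reals), the quotient $V/W$ is totally ordered by the image of $C'$, and since it is an ordered $\mathbb{R}$-vector space that is one-dimensional over the image of the line $\mathbb{R}u$ in an Archimedean-like sense — more precisely, the order unit $u$ maps to a positive element and every element is squeezed between rational multiples of it — one gets an order-preserving linear map $V/W\to\mathbb{R}$ sending $\bar u\mapsto 1$. Composing with the quotient gives $L\colon V\to\mathbb{R}$ linear with $L(C)\subseteq L(C')\subseteq\mathbb{R}_+$ and $L(u)=1$. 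The Archimedean squeezing that makes $V/W$ embed in $\mathbb{R}$ is exactly the order-unit property: for each $x$, $\{q\in\mathbb{Q} : qu - x\in C'\}$ is an upper set of rationals, nonempty and bounded below (using $-u\notin C'$), and its infimum defines $L(x)$; linearity and $L(C)\subseteq\mathbb{R}_+$ then follow from routine monotonicity arguments. I would present Steps 1--2 in full and treat Step 3's verifications as routine.
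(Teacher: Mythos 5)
The paper does not prove \cref{th:isola}; it is stated as a recalled classical fact with a pointer to the literature (cited as a ``separation theorem for cones''). There is therefore no in-paper argument to compare against, and your proposal should be judged on its own merits.

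Your proof is correct, and it is the standard purely algebraic (Zorn's lemma) argument for this result. Step~1 is fine: the observation that $-u\notin C$ (otherwise every $x = (Nu+x) + N(-u)\in C$ would force $C = V$) shows the Zorn family is nonempty, and the chain-union check is routine. The cancellation you flagged as delicate in Step~2 does in fact go through cleanly: from $-u = c_1 + \lambda y$ and $-u = c_2 - \mu y$ with $\lambda,\mu>0$ (the strict positivity being forced because $\lambda=0$ or $\mu=0$ would put $-u$ in $C'$ directly), multiplying by $\mu$ and $\lambda$ respectively and adding gives $-(\lambda+\mu)u = \mu c_1 + \lambda c_2 \in C'$, and dividing by $\lambda+\mu>0$ yields $-u\in C'$, a contradiction. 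Step~3's Dedekind-cut construction of $L$ from the half-space $C'$ and the order-unit squeezing is likewise correct; the one small imprecision is the phrase that $V/W$ is ``one-dimensional over the image of the line $\mathbb{R}u$'' --- the quotient need not be one-dimensional, but what you actually use (every class is bounded above and below by rational multiples of $\bar u$) is exactly right and is all the argument needs. For the record: in the paper's application $V = \mathbb{R}[x_1,\ldots,x_n]_{2d}$ is finite-dimensional, so one could alternatively note that the order-unit condition makes $u$ an interior point of $C$ and invoke a finite-dimensional separating-hyperplane theorem; your algebraic route is more general and needs no topology.
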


Next, we introduce the real algebraic notions of semialgebraic sets and the cone of polynomials provably positive via $\sos$. Let $\mathcal{P}=\{p_1, \dots, p_m\}$ and $\mathcal{Q}=\{ q_1, \dots, q_{\ell} \}$ be two sets of $n$-variate polynomials. The \emph{semialgebraic set} generated by the pair $(\mathcal{P}, \mathcal{Q})$ is
\begin{equation*}
    K=\{x\in \mathbb{R}^n \ | \ p_i(x)=0 \text{ for } i\in [m] \ \text{ and } q_j(x)\geq0 \text{ for } j\in[\ell]\}.
\end{equation*}
Our objective is to study polynomials that are nonnegative on $K$. Let $k \in \mathbb{N}$ and set $q_0 := 1$, then the \emph{$2k$-truncated quadratic module} is defined as
\begin{equation*}
    \mathcal{M}(\mathcal{P,Q})_{2k}:=\left\{\sum_{i=1}^mh_ip_i + \sum_{j=0}^{\ell} s_j q_j \ | \ s_j \in \Sigma, h_i \in \mathbb{R} \ \text{ s.t. } \deg(s_j q_j), \deg(h_i p_i) \leq 2k \right\}.
\end{equation*}
It is the set of polynomials that admit a degree-$2k$ $\sos$ proof from $(\mathcal{P},\mathcal{Q})$.

\begin{definition}\label{def:2k-archimedean}
    We say the the pair of polynomials sets $(\mathcal{P}, \mathcal{Q})$ is degree-$2k$ Archimedean if there exists $N\in \mathbb{N}$ such that $N-\sum_{i=1}^nx_i^2 \in \mathcal{M}(\mathcal{P}, \mathcal{Q})_{2k}$.
\end{definition}

As an immediate consequence of this definition, we have the following useful lemma.

\begin{lemma}\label{th:archimedeanity_implies_order_of_unit}
    Assume $(\mathcal{P}, \mathcal{Q})$ is degree-$2k$ Archimedean for some $k \in \mathbb{N}$. Then, for every polynomial $p$ of degree $2d$ there exists $N \in \mathbb{N}$ such that $N- p\in \mathcal{M}(\mathcal{P}, \mathcal{Q})_{2(d+k-1)}$.
\end{lemma}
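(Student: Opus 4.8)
The plan is to reduce the general statement about an arbitrary degree-$2d$ polynomial $p$ to the Archimedean hypothesis applied to the specific polynomial $\sum_{i=1}^n x_i^2$, using only elementary manipulations inside the truncated quadratic module. First I would note the trivial but crucial fact that any single monomial $x^\alpha$ of degree at most $2d$ can be bounded in absolute value by a constant multiple of $1 + (\sum_{i=1}^n x_i^2)^d$: indeed $|x^\alpha| \le \prod_i |x_i|^{\alpha_i} \le \big(\sum_i x_i^2\big)^{|\alpha|/2}$ by the AM--GM / power-mean inequality when $|\alpha|$ is even, and for odd $|\alpha|$ one absorbs the extra factor into $1 + \sum_i x_i^2$. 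Hence there is an explicit constant $c_{n,d}$ with $c_{n,d}\big(1 + (\sum_i x_i^2)^d\big) \pm x^\alpha \in \Sigma_{2d}$, i.e. this difference is literally a sum of squares of degree at most $2d$. Summing over the at most $\binom{n+2d}{2d}$ monomials appearing in $p$ with their (bit-size-polynomial) coefficients, we obtain a constant $C$ such that $C\big(1 + (\sum_i x_i^2)^d\big) - p \in \Sigma_{2d} \subseteq \mathcal{M}(\mathcal{P},\mathcal{Q})_{2d}$.

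The second step is to control the term $(\sum_i x_i^2)^d$ using the degree-$2k$ Archimedean hypothesis. By Definition \ref{def:2k-archimedean} we have $N - \sum_i x_i^2 \in \mathcal{M}(\mathcal{P},\mathcal{Q})_{2k}$ for some $N$; I would then show by induction on $d$ that $N^d - (\sum_i x_i^2)^d \in \mathcal{M}(\mathcal{P},\mathcal{Q})_{2(d+k-1)}$. For the inductive step one writes
\begin{align*}
N^d - \Big(\textstyle\sum_i x_i^2\Big)^d &= N^{d-1}\Big(N - \textstyle\sum_i x_i^2\Big) + \Big(\textstyle\sum_i x_i^2\Big)\Big(N^{d-1} - \big(\textstyle\sum_i x_i^2\big)^{d-1}\Big).
\end{align*}
The first summand lies in $\mathcal{M}(\mathcal{P},\mathcal{Q})_{2k}$ after scaling by the nonnegative scalar $N^{d-1}$; the second is the product of the sum of squares $\sum_i x_i^2 \in \Sigma_2$ with an element of $\mathcal{M}(\mathcal{P},\mathcal{Q})_{2(d+k-2)}$ by the induction hypothesis, and multiplying a quadratic-module element by an $\sos$ multiplies the SOS parts (still SOS) and the $h_i$ parts (still polynomials), landing in $\mathcal{M}(\mathcal{P},\mathcal{Q})_{2(d+k-1)}$ — here I would check that the degree bookkeeping works out, namely $\deg(s_j q_j) \le 2 + 2(d+k-2) = 2(d+k-1)$ and similarly for the $h_i p_i$ terms. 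This requires being slightly careful with the definition of $\mathcal{M}$, in particular that the $h_i$ are allowed to be arbitrary polynomials (so multiplying by a quadratic is harmless) and that the degree caps are on the products $h_i p_i$ and $s_j q_j$.

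Combining the two steps, $N - p = \big(C(1 + (\sum_i x_i^2)^d) - p\big) + \big(C N^d - C(\sum_i x_i^2)^d\big) + \big(C + C(\sum_i x_i^2)^d - C N^d\big)$ — better organized: take $N' = C + C N^d$; then $N' - p = \big(C(1+(\sum_i x_i^2)^d) - p\big) + C\big(N^d - (\sum_i x_i^2)^d\big)$, where the first bracket is in $\Sigma_{2d} \subseteq \mathcal{M}(\mathcal{P},\mathcal{Q})_{2d} \subseteq \mathcal{M}(\mathcal{P},\mathcal{Q})_{2(d+k-1)}$ (the last inclusion since $k \ge 1$, so $d + k - 1 \ge d$) and the second bracket is in $\mathcal{M}(\mathcal{P},\mathcal{Q})_{2(d+k-1)}$ by the induction above, scaled by the nonnegative scalar $C$. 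Since the truncated quadratic module is closed under addition and under scaling by nonnegative reals, we conclude $N' - p \in \mathcal{M}(\mathcal{P},\mathcal{Q})_{2(d+k-1)}$, which is exactly the claim with $N = N'$. The only genuinely delicate point is the degree accounting in the inductive step — one must make sure that the base case $d=1$ reads "$N - p \in \mathcal{M}(\mathcal{P},\mathcal{Q})_{2k}$" consistently with the formula $2(d+k-1) = 2k$, and that no hidden degree inflation occurs when an $\sos$ multiplier of degree $2$ meets an $h_i p_i$ term that is already at the degree cap; this is handled because the cap is stated on the product, not on $h_i$ individually, so the product simply moves up by $2$ as required.
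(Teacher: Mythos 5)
Your proof takes a genuinely different route from the paper: rather than handling each monomial of $p$ by a direct induction that threads the Archimedean hypothesis through at every step (which is what the paper does, via the identity $N^2-x_i^2m_2^2=(N-m_2^2)x_i^2+N(N-x_i^2)$ and then the pairing $2N+\tfrac32\pm m_1m_2=\tfrac12[(1\mp m_1)^2+(1\pm m_2)^2+\dots]$), you try to first dominate $p$ purely inside $\Sigma_{2d}$ by a universal polynomial $C\bigl(1+(\sum_i x_i^2)^d\bigr)$, and only afterwards invoke Archimedeanity to bound $(\sum_i x_i^2)^d$ by a scalar. Your Step~2 (the telescoping $N^d-r^d=N^{d-1}(N-r)+r\,(N^{d-1}-r^{d-1})$ with $r=\sum_i x_i^2$) is correct and the degree bookkeeping checks out.

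The gap is in Step~1, and it is a real one in this context. You argue $|x^\alpha|\le(\sum_i x_i^2)^{|\alpha|/2}$ by AM--GM and then conclude ``hence $c_{n,d}\bigl(1+(\sum_i x_i^2)^d\bigr)\pm x^\alpha\in\Sigma_{2d}$.'' AM--GM gives only pointwise nonnegativity, and in more than one variable nonnegativity does not imply membership in $\Sigma$; conflating the two is exactly the distinction that the whole SoS framework hinges on. The claim you assert does happen to be true, but it requires a separate argument. Concretely, splitting $x^\alpha=x^\beta x^\gamma$ with $|\beta|,|\gamma|\le d$ gives the SoS certificate $\tfrac12(x^\beta\mp x^\gamma)^2=\tfrac12(x^{2\beta}+x^{2\gamma})\pm x^\alpha$, but then you still have to show that $c\bigl(1+(\sum_i x_i^2)^d\bigr)-x^{2\delta}\in\Sigma_{2d}$ for $|\delta|\le d$, and here $1+r^d$ dominates $x^{2\delta}$ only numerically, not visibly as an SoS. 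The cleanest fix is to replace $1+(\sum_i x_i^2)^d$ by $(1+\sum_i x_i^2)^d$ (or by $\sum_{e=0}^d(\sum_i x_i^2)^e$): then $(1+\sum_i x_i^2)^d-x^{2\delta}$ is literally a nonnegative combination of monomial squares by the multinomial expansion, so the SoS certificate is immediate, and your Step~2 induction applies verbatim with $1+\sum_i x_i^2$ in place of $\sum_i x_i^2$ (since $(N+1)-(1+\sum_i x_i^2)=N-\sum_i x_i^2\in\mathcal{M}(\mathcal{P},\mathcal{Q})_{2k}$). With that substitution your approach goes through and is arguably cleaner than the paper's pairing trick; as written, however, Step~1 contains an unjustified jump from a numerical bound to an SoS membership claim.
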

\begin{proof}
    It suffices to show that for any monomial $m$ of degree at most $2d$ there exists $N'$ such that $N'\pm m\in \mathcal{M}(\mathcal{P}, \mathcal{Q})_{2(d+k-1)}$. Let $N_k$ be as $N$ in Definition \ref{def:2k-archimedean}. We first show the following claim.
    
    \textbf{Claim (1):}  Let $m_1$ be a monomial of degree at most $d$. Then there exists $N'$ such that $N' \pm m_1^2\in \mathcal{M(P,Q})_{2(d+k-1)}$
    
    \textbf{Proof of Claim (1).}We proceed by induction on $d$. For $d=1$, we have $m=x_i$ for some $i\in[n]$, and thus $ N_k - x_i^2 =N_k - 
    \sum_{i=1}^nx_i^2 +\sum_{j\neq i}x_j^2\in \mathcal{M(P,Q)}_{2k}$. Clearly, we also have $N_k + x_i^2\in \mathcal{M(P,Q)}_{2k}$. Now we assume the claim holds for all monomials with degree at most $d$. Let $m_1$ with $\deg(m_1)=d+1$, so that $m_1^2=x_i^2m_2^2$, for some $i\in [n]$ and some monomial $m_2$ with $\deg(m_2) = d$. By the induction hypothesis, there exists $\tilde{N}$ such that $\tilde{N}-m_2^2 \in \mathcal{M(P,Q)}_{2(d+k-1)}$. We set $N=\max\{N_k, \tilde{N}\}$ and we have the following identity
    \begin{align*}
       N^2-x_i^2m_2^2 = (N-m_2^2)x_i^2 + N(N-x_i^2), 
    \end{align*}
    which, by the induction hypothesis and under the given assumptions, shows that $N^2-x_i^2m_2^2\in \mathcal{M(P,Q)}_{2(d+k)}$. Clearly, we have that $N^2+x_i^2m_2^2\in\mathcal{M(P,Q)}_{2(d+k)}$, which concludes the proof of the claim. $\triangleleft$
    
    To conclude the proof of the lemma we consider a monomial $m$ of degree at most $2d$ and decompose it as $m=m_1m_2$, where $m_1$ and $m_2$ are monomials of degree at most $d$. By Claim (1), there exist natural numbers $N_1, N_2$ such that $N_1-m_1^2\in \mathcal{M}(\mathcal{P}, \mathcal{Q})_{2(d+k-1)}$ and $N_2-m_2^2 \in \mathcal{M}(\mathcal{P}, \mathcal{Q})_{2(d+k-1)}$. Then, for $N=\max\{N_1,N_2\}$, the following identities hold:
    \small{
    \begin{align*}
    \frac{1}{2}\Big[(1-m_1)^2 + (1-m_2)^2 + (1+m_1+m_2)^2 + 2(N-m_1^2) + 2(N-m_2^2)\Big]&= 2N+\frac{3}{2} + m_1m_2 \\
    \frac{1}{2}\Big[(1-m_1)^2 + (1+m_2)^2 + (1+m_1-m_2)^2 + 2(N-m_1^2) + 2(N-m_2^2)\Big] &= 2N+\frac{3}{2}-m_1m_2.
    \end{align*}}
    \normalsize
    This shows that there exists a natural number $N'$ such that $N'\pm m\in   \mathcal{M}(\mathcal{P}, \mathcal{Q})_{2(d+k-1)}$.
\end{proof}

Next, we introduce pseudoexpectations, a technical concept often useful for analyzing $\sos$ in proof complexity (see e.g.~\cite{BarakS14}). Crucially, under the mild condition of explicit boundedness--an assumption slightly stronger than Archimedeanity--the existence of a pseudoexpectation is equivalent to the nonexistence of an $\sos$ refutation for any given set of constraints \cite{BarakS14}. While this duality plays an important role for understanding $\sos$ refutations, in what follows we will rely only on one direction of the equivalence. Specifically, in \cref{th:duality_pseudoexpectations_and_refutations} we argue that the existence of a pseudoexpectation implies the absence of $\sos$ refutations.

\begin{definition}\label{def: pseudoexp}
    Consider the pair $(\mathcal{P,Q})$. A degree-$2d$ \emph{pseudoexpectation} for $(\mathcal{P}, \mathcal{Q})$ is a linear functional $L:\mathbb{R}[x_1, \ldots, x_n]_{2d}\to \mathbb{R}$ such that 
    \begin{itemize}
        \item $L(1)=1$.
        \item $L(p)\geq 0$ for every $p\in\mathcal{M}(\mathcal{P}, \mathcal{Q})_{2d}$.
    \end{itemize}
\end{definition}

\begin{remark}\label{th:duality_pseudoexpectations_and_refutations}
    Suppose a degree-$2d$ pseudoexpectation $L$ exists for $(\mathcal{P}, \mathcal{Q})$. We show that there is no degree-$2d$ $\sos$ refutation for $(\mathcal{P,Q})$. For the sake of contradiction, assume that there exists such a refutation of the form $-1 =s_0 + \sum h_i p_i + \sum s_iq_i$, where the $s_i$'s are sums of squares and the $h_i$'s are polynomials in $\mathbb{R}[x_1,\ldots,x_n]$. Then, by applying $L$ to both sides of the equality, we obtain that $-1 = L(-1) = L\left( s_0 \right) + L\left( \sum h_i p_i \right) + L \left( \sum s_i q_i \right)$. However, by \cref{def: pseudoexp}, it follows that the RHS of the equality is greater or equal to zero, thus leading to a contradiction. Therefore, the existence of a degree-$2d$ pseudoexpectation $L$ implies the nonexistence of a degree-$2d$ $\sos$ refutation of $(\mathcal{P}, \mathcal{Q})$. 
\end{remark}

\begin{theorem}\label{th:normal_form_archimedean_systems}
    Let $d$ and $k$ be fixed natural numbers such that $d\geq k\geq 1$. Consider a set of polynomial equalities $\mathcal{R}$ and let $(\mathcal{P,Q})$ be a degree-$2k$ Archimedean pair. If there exists a degree-$2d$ refutation of $(\mathcal{P \cup R,Q})$, then there exists also a refutation of the form 
    \begin{equation*}
        -1 = \sigma + \sum_{r\in \mathcal{R}}a_r r^2 + \sum_{p \in \mathcal{P}} h_p p + \sum_{q \in \mathcal{Q}} s_q q,
    \end{equation*}
    where $\sigma, s_q \in \Sigma$, $h_p \in \mathbb{R}[x_1, \ldots, x_n]$, and $a_r\in \mathbb{R}$ is a scalar. Further, the degrees of $\sigma, h_p p$, and $s_q q$ are all at most $2(d+k-1)$.
\end{theorem}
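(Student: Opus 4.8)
The strategy is to apply the Isolation Theorem for cones (\cref{th:isola}) to a cleverly chosen convex cone, so that the separating functional becomes a pseudoexpectation-like object that vanishes on the ``bad'' parts of the refutation. Concretely, I would work in the vector space $V = \mathbb{R}[x_1,\ldots,x_n]_{2(d+k-1)}$ and define $C$ to be the cone of polynomials of the form $\sigma + \sum_{r \in \mathcal{R}} a_r r^2 + \sum_{p \in \mathcal{P}} h_p p + \sum_{q \in \mathcal{Q}} s_q q$ with $\sigma, s_q \in \Sigma$, $a_r \in \mathbb{R}_{\ge 0}$, $h_p \in \mathbb{R}[x]$, and all summands of degree at most $2(d+k-1)$. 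This $C$ is clearly a convex cone. The key dichotomy is: either $-1 \in C$, in which case we are done (this is exactly a normal-form refutation, noting that $a_r \ge 0$ forces $a_r r^2$ to be a square term we can absorb into $\sigma$ if we wish, but we keep it separate as stated), or $-1 \notin C$, and I must derive a contradiction with the hypothesis that a degree-$2d$ refutation of $(\mathcal{P}\cup\mathcal{R},\mathcal{Q})$ exists.

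In the second case, I would first argue that $C$ is a \emph{proper} cone: if $C = V$ then $-1 \in C$, contradiction. Next I would show that $1$ (the constant polynomial) is an \emph{order unit} for $C$. This is where \cref{th:archimedeanity_implies_order_of_unit} enters: for any $v \in V$ of degree at most $2(d+k-1)$... wait, the lemma is stated for polynomials of degree $2d$ giving membership in $\mathcal{M}(\mathcal{P},\mathcal{Q})_{2(d+k-1)}$, so I should be careful about the degree bookkeeping — I would invoke it (or a mild variant) to get $N - v \in \mathcal{M}(\mathcal{P},\mathcal{Q})_{2(d+k-1)} \subseteq C$ for some $N$, hence $N\cdot 1 + (-v) \in C$, wait I need $N u + v \in C$, so I apply the lemma to $-v$: there is $N$ with $N - (-v) = N + v \in \mathcal{M}(\mathcal{P},\mathcal{Q})_{2(d+k-1)} \subseteq C$. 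So $1$ is an order unit. Then \cref{th:isola} yields a linear functional $L : V \to \mathbb{R}$ with $L(1) = 1$ and $L(C) \subseteq \mathbb{R}_+$. In particular $L$ is nonnegative on $\Sigma$ (degree $\le 2(d+k-1)$), on $s_q q$, on $\pm h_p p$ (since $h_p p$ and $-h_p p$ both lie in $C$ — taking $-h_p$ — so $L(h_p p) = 0$ for all $h_p$ of appropriate degree, i.e.\ $L$ vanishes on the truncated ideal), and on $\pm r^2$ hmm — wait, $a_r \ge 0$ only, so $r^2 \in C$ gives $L(r^2) \ge 0$, but $-r^2$ is \emph{not} forced into $C$. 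That is fine: I only need $L(r^2) \ge 0$.

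The contradiction then comes from restricting $L$ to degree $2d$ and checking it is a degree-$2d$ pseudoexpectation for $(\mathcal{P} \cup \mathcal{R}, \mathcal{Q})$ in the sense of \cref{def: pseudoexp}, which by \cref{th:duality_pseudoexpectations_and_refutations} contradicts the existence of a degree-$2d$ refutation. To see $L$ is a pseudoexpectation: $L(1) = 1$ is given; and for any $g \in \mathcal{M}(\mathcal{P}\cup\mathcal{R},\mathcal{Q})_{2d}$, writing $g = s_0 + \sum h_p p + \sum h_r r + \sum s_q q$ with degrees $\le 2d \le 2(d+k-1)$, I get $L(s_0) \ge 0$, $L(s_q q) \ge 0$, $L(h_p p) = 0$, and the only delicate term is $L(h_r r)$ for $r \in \mathcal{R}$ — here I would use the standard trick $h_r r = \tfrac14[(h_r + \tfrac12 r)^2 - (h_r - \tfrac12 r)^2] \cdot$ hmm that's not quite it either; rather $h_r r$ modulo the ideal is not obviously controlled. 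The cleaner route: since $r \in \mathcal{R}$, the identity $h_r r = \tfrac{1}{2}[(h_r+r)^2 r^0 \ldots]$ — actually the right move is to observe $4 h_r r = (h_r + 1)^2 r - (h_r - 1)^2 r + \ldots$, no. I think the correct and simplest argument is: $h_r r = \tfrac14\big((h_r+1)^2 - (h_r-1)^2\big)r$ is wrong since that equals $h_r r$ only up to the factor... let me instead use that on the zero set $r = 0$, and pseudoexpectations for $(\mathcal{P}\cup\mathcal{R},\mathcal{Q})$ by definition must be nonnegative on $h_r r$ terms too. So actually I need $L(h_r r) \ge 0$ for arbitrary polynomial $h_r$ (degree $\le 2d - \deg r$), which is false in general for our $L$ unless I also force $\pm h_r r \in C$.

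\textbf{The main obstacle}, then, is exactly the treatment of the $\mathcal{R}$-constraints in the cone $C$: to make $L$ a genuine pseudoexpectation for $(\mathcal{P}\cup\mathcal{R},\mathcal{Q})$ I should enlarge $C$ to also contain $\pm h_r r$ for all $h_r$ of bounded degree (i.e.\ include the full truncated ideal of $\mathcal{R}$), \emph{not} just $r^2$. But then the ``either $-1 \in C$'' branch no longer immediately gives the claimed normal form — it gives a refutation with arbitrary polynomial multipliers on $\mathcal{R}$, which is the thing we started with. The resolution, following Hakoniemi's Boolean argument, is a two-step scheme: (1) use the enlarged cone $C'$ (with full $\mathcal{R}$-ideal) to produce the separating functional $L$, note it is a degree-$2(d+k-1)$ pseudoexpectation for $(\mathcal{P}\cup\mathcal{R},\mathcal{Q})$, hence $-1 \notin C'$ would contradict the degree-$2d$ refutation \emph{only after checking the refutation lifts to degree $2(d+k-1)$} (it does, trivially); so in fact $-1 \in C'$; (2) now separately run the Isolation Theorem with the \emph{smaller} cone $C$ (only $r^2$ terms with nonnegative coefficients): if $-1 \notin C$, the resulting functional $L$ satisfies $L(1)=1$, $L \ge 0$ on $\Sigma$, $L(h_p p) = 0$, $L(s_q q) \ge 0$, and $L(r^2) \ge 0$ — and then applying $L$ to the known degree-$2d$ refutation $-1 = s_0 + \sum h_p p + \sum h_r r + \sum s_q q$ gives $-1 = L(s_0) + 0 + \sum L(h_r r) + \sum L(s_q q)$, and here I use that each $h_r r$ can be rewritten modulo squares: $h_r r = \tfrac14\big[(h_r + r)^2 - (h_r-r)^2 + \text{lower-order in }r\big]$ — more precisely $(h_r + r)^2 = h_r^2 + 2 h_r r + r^2$ so $2 h_r r = (h_r+r)^2 - h_r^2 - r^2$, giving $L(h_r r) = \tfrac12(L((h_r+r)^2) - L(h_r^2) - L(r^2))$, which is not sign-definite. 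So even this fails.

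The honest fix, which I would carry out in detail in the paper, is the one from Hakoniemi's thesis: replace each $h_r r$ by completing the square at the level of the \emph{refutation} before separating — write $h_r r = (h_r + \tfrac12)^2 r - (h_r^2 + \tfrac14) r \cdot$ hmm. Actually the clean identity is: $h_r r \equiv h_r^2 r^2 / r$... no. Let me just state the plan: I would use the substitution $h_r r = \big(\tfrac{h_r+1}{2}\big)^2 r - \big(\tfrac{h_r - 1}{2}\big)^2 r$, wait $\big(\tfrac{h_r+1}{2}\big)^2 - \big(\tfrac{h_r-1}{2}\big)^2 = h_r$, yes! So $h_r r = \big(\tfrac{h_r+1}{2}\big)^2 r - \big(\tfrac{h_r-1}{2}\big)^2 r$. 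This expresses each ideal term as a difference of two $(\text{square})\cdot r$ terms. Now the trick: $(\text{square})\cdot r$, where the square is $\tau^2$, can be massaged using Archimedeanity — $\tau^2 r = \tau^2 r - \tau^2 r^2/N \cdot$... I would use that $N - r$ or rather handle $\tau^2 r$ by noting $\tau^2 r = \tfrac14[(\tau r + \text{something})^2 - \ldots]$. I recognize I am going in circles on the exact algebraic identity; the point of the plan is: \textbf{the crux is an algebraic manipulation, enabled by degree-$2k$ Archimedeanity (specifically \cref{th:archimedeanity_implies_order_of_unit}), that converts arbitrary ideal multipliers $h_r r$ into the canonical form $a_r r^2$ plus an extra SoS term, at the cost of raising the degree from $2d$ to $2(d+k-1)$}, after which the Isolation Theorem with the small cone $C$ closes the argument by contradiction. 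I expect the degree-counting in this conversion — ensuring every term stays within $2(d+k-1)$ when we multiply SoS terms by the Archimedean certificate of $N - (\text{monomial})$ — to be the most delicate bookkeeping, and the reason the hypothesis $d \ge k \ge 1$ is needed.
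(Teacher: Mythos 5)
Your high-level scaffold—separating cone, order unit via Archimedeanity, Isolation Theorem, contradiction against a pseudoexpectation—matches the paper's proof. But there are two gaps, and they are exactly the places where you say you are "going in circles," so it is worth being precise about what is missing.

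\textbf{First gap: the sign of $a_r$.} You define the cone $C$ with $a_r \in \mathbb{R}_{\ge 0}$, and remark that positive $a_r$ could anyway be absorbed into $\sigma$. But the theorem statement, and the paper's cone, allow $a_r \in \mathbb{R}$ (arbitrary sign). This is not a cosmetic choice: it is what makes the separation argument work. With $a_r$ unrestricted, both $r^2$ and $-r^2$ lie in $C$, so the separating functional $L$ satisfies $L(r^2)=0$, not merely $L(r^2) \ge 0$. The equality is essential for the next step. (And it costs nothing on the ``$-1 \in C$'' branch: the theorem as stated permits scalars of either sign.) Relatedly, the paper takes $V = \mathbb{R}[x]_{2d}$, not $\mathbb{R}[x]_{2(d+k-1)}$; the cone $C$ consists of degree-$\le 2d$ polynomials admitting degree-$2(d+k-1)$ proofs of the normal form, which is what lets you directly compare the resulting $L$ against the assumed degree-$2d$ refutation.

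\textbf{Second gap: $L$ vanishes on the truncated ideal of $\mathcal{R}$.} You correctly identify that the whole difficulty is showing $L(h_r r)=0$ for arbitrary $h_r$ of the right degree, and you correctly reject the option of enlarging $C$ (which would destroy the normal-form conclusion). The algebraic identity you are hunting for does not exist in isolation; the paper's argument is an Archimedean squeeze in two steps. Fix $r \in \mathcal{R}$ with $\deg r = d_0$, take a monomial $m$ with $\deg(rm)\le 2d$, and write $m = m_1 m_2$ with $\deg m_1 \le d - d_0$, $\deg m_2 \le d$. Step (i): $L(m_1^2 r^2)=0$. Indeed $m_1^2 r^2$ is a square so $L(m_1^2 r^2)\ge 0$; and by \cref{th:archimedeanity_implies_order_of_unit} there is $N$ with $N - m_1^2 \in \mathcal{M}(\mathcal{P},\mathcal{Q})_{2(d-d_0+k-1)}$, hence $Nr^2 - m_1^2 r^2 = (N-m_1^2) r^2 \in \mathcal{M}(\mathcal{P},\mathcal{Q})_{2(d+k-1)}$ (multiplying a quadratic-module element by the square $r^2$ stays in the quadratic module), so $0 \le L(Nr^2 - m_1^2 r^2) = N\cdot 0 - L(m_1^2 r^2)$, giving the reverse inequality. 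Step (ii): for every $a>0$, $0 \le L\big((m_1 r \pm a m_2)^2\big) = L(m_1^2 r^2) \pm 2a\, L(m_1 m_2 r) + a^2 L(m_2^2) = \pm 2a\, L(mr) + a^2 L(m_2^2)$, so $\pm L(mr) + \tfrac{a}{2}L(m_2^2) \ge 0$ for all $a>0$; sending $a \to 0$ forces $L(mr)=0$. By linearity $L$ then vanishes on all $h_r r$ of degree $\le 2d$, so the restriction of $L$ to degree $2d$ is a genuine pseudoexpectation for $(\mathcal{P}\cup\mathcal{R},\mathcal{Q})$, contradicting the degree-$2d$ refutation via \cref{th:duality_pseudoexpectations_and_refutations}.

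In short: allow signed scalars $a_r$ so that $L(r^2)=0$, and then propagate that vanishing to $L(mr)=0$ by the squeeze above, rather than by an algebraic identity on the refutation itself. That is the content you were missing.
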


\begin{proof}
    Let $C$ be the set of degree-$2d$ polynomials that admit a degree $2(d+k-1)$ $\sos$ proof of the form $\sigma+ \sum_{r\in \mathcal{R}}a_r r^2 + \sum_{p \in \mathcal{P}} h_p p + \sum_{q \in \mathcal{Q}} s_q q$, where $a_r\in \mathbb{R}$ for $r\in \mathcal{R}$, and $h_p$ and $s_q$ are polynomials, for $p\in \mathcal{P}$ and $q\in \mathcal{Q}$. Then, $C$ is a convex cone in the vector space $V=\mathbb{R}[x_1, \ldots, x_n]_{2d}$. Furthermore, it follows from \cref{th:archimedeanity_implies_order_of_unit} that $u=1$ is a order unit of $\mathcal{M}(\mathcal{P,Q})_{2(d+k-1)}$, and, therefore, of $C$ (in $V$) as well.
    
    We proceed by contradiction. Suppose that $-1 \notin C$. This further implies that $C$ is a proper convex cone. By the isolation theorem (Theorem \ref{th:isola}), there exists a linear functional $L:\mathbb{R}[x_1, \ldots, x_n]_{2d}\to \mathbb{R}$ such that $L(1) = 1$ and $L(C) \subseteq \mathbb{R}_+$. In particular, this implies:
    \begin{itemize}
        \item $L(p)\geq 0$ for $p\in \mathbb{R}[x_1, \ldots, x_n]_{2d}\cap \mathcal{M}(\mathcal{P,Q})_{2(d+k-1)},$
        \item $L(r^2)=0$ for all $r\in \mathcal{R}$.
    \end{itemize}
    We will show that $L$ is a degree-$2d$ pseudoexpectation for $(\mathcal{P \cup R,Q})$. This, together with \cref{th:duality_pseudoexpectations_and_refutations}, implies that there is no degree-$2d$ $\sos$ refutation for the system $(\mathcal{P\cup R, Q})$, reaching a contradiction. 
    For this, it remains to show that $L(rm)=0$, where $r \in \mathcal{R}$ and $m$ is a monomial such that $\deg(rm)\leq 2d$. 
    Assume that $\deg(r)=d_0$ and decompose $m$ as $m=m_1m_2$ with $\deg(m_1)\leq d-d_0$ and $\deg(m_2)\leq d$. 
    
    We first prove that $L(m_1^2r^2)=0$. Since $(\mathcal{P,Q})$ is a degree-$2k$ Archimedean pair, there exists $N\in \mathbb{N}$ such that $N-m_1^2\in \mathcal{M}(\mathcal{P,Q})_{2(d-d_0+k-1)}$, and thus $Nr^2-m_1^2 r^2 \in \mathcal{M}(\mathcal{P,Q})_{2(d+k-1)}$. Then, we have $0\leq L(Nr^2 - m_1^2r^2) = -L(m_1^2r^2) \leq 0$.  Hence, $L(m_1^2r^2)=0$. 
    
    Next, let $a>0$ be a positive real number. Then, we have  
    \begin{equation*}
        0 \leq  L((m_1r \pm a m_2)^2) = L(m_1^2r^2) \pm 2aL(m_1m_2r) + a^2L(m_2^2)  =  \pm 2aL(m_1m_2r) + a^2L(m_2^2).
    \end{equation*}
    Then, we have that $\pm L(m_1m_2r) + \frac{a}{2}L(m_2)\geq 0$ for all $a>0$. This implies that $L(m_1m_2r) = L(mr) = 0$ as desired.
\end{proof}

\begin{remark}\label{ex:finite-domain}[Finite domain sets]
    Consider $x_1,\dots,x_n$ variables and let $k$ be a fixed integer. Let the \emph{finite domain set} be defined as $\mathcal D =\{ D_i = (x_i - \rho_1)(x_i - \rho_2)\cdots(x_i - \rho_{2k})\}_{i=1}^n$, for $\rho_j \in \mathbb{R}$. Note that each constraint $D_i(x_i) = 0$ enforces $x_i$ to take values in $\{\rho_1,\dots,\rho_{2k}\}$ for all $i$. It can be observed that $(\mathcal{D}, \emptyset)$ is a $2k$-Archimedean pair (see \cite{BortolottiMV25ICALP}).
\end{remark}

\begin{remark}[Dimension reduction in $\sos$ refutations]\label{rem:normal_form_SDP_dimension_reduction}
    The normal form established in \cref{th:normal_form_archimedean_systems} leads to a practical dimension reduction in $\sos$ refutations. Given an infeasible polynomial system $\mathcal{R}$ with a degree-$2d$ $\sos$ refutation of the form $-1 = \sigma + \sum_{r \in \mathcal{R}}\lambda_r r$, the standard formulation involves an SDP with up to $|\mathcal{R}|\binom{n + 2d}{2d}$ variables, due to the polynomial multipliers $\lambda_r$. However, by adding a ball constraint $\sum x_i^2 \leq M$, we obtain a $2$-Archimedean system, allowing for a refutation of the form $-1 = \tilde{\sigma} + \sum a_r r^2 + s \left( M - \sum x_i^2 \right)$, where $a_r \in \mathbb{R}$, $\tilde{\sigma},s \in \Sigma$ and $s$ has degree at most $2d - 2$. This reduces the number of variables in the SDP to $\binom{n + 2d}{2d} + |\mathcal{R}| + \binom{n + 2d - 2}{2d - 2}$. This decrease in the dimensionality of the problem is not sufficient for meaningful gains in a computational complexity sense on its own. Nevertheless, the possibility remains that this reduction could lead to sensible improvements in actual computation time during implementation. This goes beyond the scope of the present paper, and we defer this analysis for future work.
\end{remark}

\subsection{Gröbner bases reductions}

In this section, we simplify $\sos$ proofs by using polynomial division. We begin by giving basic notation and results related to polynomial division and Gröbner bases (see also \cite{Cox}).

Consider $\mathbb R[x_1,\dots,x_n]$ ordered according to any graded order. For simplicity, we will consider the \emph{graded lexicographic order} (\texttt{grlex}). Consider the polynomials $r, f_1, \ldots, f_t \in \mathbb{R}[x_1,\dots,x_n]$ and let $I = \langle f_1, \ldots, f_t \rangle$ be the ideal generated by the set of polynomials $\mathcal F = \{f_1, \ldots, f_t\}$. We denote by $\overline{r}$ the remainder of the polynomial division of $r$ by $\mathcal F$ and we say that $\overline{r}$ is the \emph{reduced form of $r$ by $\mathcal F$}. Note that, under the \texttt{grlex} order, it follows that $\deg(\overline{r}) \leq \deg(r)$.

Further, let $I \subseteq \mathbb{R}[x_1, \ldots, x_n]$ be an ideal. If the property ``$\overline{r} = 0$ if and only if $r \in I$'' holds, we say that $\mathcal{F}$ is a \emph{Gröbner basis of $I$}. Moreover, it is known that the remainder of the polynomial reduction of a polynomial $r$ by a Gröbner basis $\mathcal{F}$ is uniquely determined. The uniqueness is in general not guaranteed for arbitrary polynomial systems. Notably, the Boolean axioms $ \mathcal{B}_n = \{x_1^2 - x_1, \ldots, x_n^2 - x_n\}$ constitute a Gröbner basis for the ideal $\langle \mathcal{B}_n \rangle$ they generate and whose zero set is given by the binary Boolean hypercube $\{0,1\}^n$.

Applying polynomial reduction by a Gröbner basis provides a way to simplify $\sos$ refutations by yielding a canonical reduced form for $\sos$ proofs modulo the generated ideal. We have the following result.

\begin{lemma}\label{th:reduction}
    Let $\mathcal{P} = \{p_1 = 0, \ldots, p_m = 0\}$ be a set of polynomial equality constraints, $\mathcal F=\{f_1,\dots,f_t\}$ be a Gröbner basis (in \texttt{grlex} order) for the ideal $\langle \mathcal{F} \rangle$ and let $r\in \mathbb R[x_1,\dots,x_n]_{2d}$.
    \begin{itemize}
        \item \emph{Reduction.} Let $r = \sigma + \sum_{i=1}^m h_i p_i + \sum_{i=1}^t q_i f_i$ be a degree-$2d$ $\sos$ proof of $r$ from $\mathcal{P \cup \mathcal{F}}$. Then the identity $\overline{r} = \overline{\sigma} + \sum_{i=1}^m \overline{h_i p_i}$ holds. Further, the RHS has degree at most $2d$ and size polynomial in the size of $r$ and $\mathcal{P \cup F}$.
        \item \emph{Reconstruction.} Suppose there exists an $\sos$ $\sigma$ and polynomials $h_i$ satisfying $\overline{r} = \overline{\sigma} + \sum_{i=1}^m \overline{h_i p_i}$ with $\max \{\deg{\sigma},\deg{h_i p_i}\} \leq 2d$ and with total size $\ell$. Then there exists a degree-$2d$ $\sos$ proof $r = \sigma + \sum_{i=1}^m h_i p_i + \sum_{i=1}^t q_i f_i$ of size $\poly(\ell)$ with degree at most $2d$.
    \end{itemize}
\end{lemma}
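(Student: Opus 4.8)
The plan is to derive both directions from three standard facts about a \GB basis $\mathcal F=\{f_1,\dots,f_t\}$ of the ideal $I=\langle\mathcal F\rangle$ in the graded order \texttt{grlex}. (i) The normal-form map $N\colon g\mapsto\overline g$ is well defined and $\mathbb R$-linear: since $\mathcal F$ is a \GB basis, $\overline g$ is the unique polynomial with $g-\overline g\in I$ and no term divisible by a leading term of $\mathcal F$, so $N$ is the linear projection of $\mathbb R[x_1,\dots,x_n]$ onto the span of the standard monomials along $I$. (ii) $\overline g=0$ if and only if $g\in I$. (iii) In \texttt{grlex} the multivariate division algorithm never raises the degree: on a dividend $g$ of degree $\le D$ it outputs quotients $q_i$ with $\deg(q_i f_i)\le D$ and remainder $\overline g$ with $\deg\overline g\le D$; moreover the leading monomial of the running polynomial strictly decreases at every step while only monomials of degree $\le D$ ever appear, so the algorithm halts after at most $\binom{n+D}{D}=\poly(n)$ steps of rational arithmetic on polynomials with $\le\binom{n+D}{D}$ monomials, and hence its output has bit size polynomial in $n$, $D$ and the bit sizes of $g$ and $\mathcal F$.

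\emph{Reduction.} Apply $N$ to the given identity $r=\sigma+\sum_{i=1}^m h_i p_i+\sum_{i=1}^t q_i f_i$. By (i), $\overline r=\overline\sigma+\sum_{i=1}^m\overline{h_i p_i}+\sum_{i=1}^t\overline{q_i f_i}$, and since each $q_i f_i\in I$ we get $\overline{q_i f_i}=0$ by (ii); this yields $\overline r=\overline\sigma+\sum_{i=1}^m\overline{h_i p_i}$. Each term on the right is the remainder of a polynomial of degree $\le 2d$, hence by (iii) has degree $\le 2d$; in particular it lies in $\mathbb R[x_1,\dots,x_n]_{2d}$, so has at most $\binom{n+2d}{2d}=\poly(n)$ monomials, and there are $m+1=\poly(n)$ such summands, so the representation size of the right-hand side is polynomial in the (polynomial) sizes of $r$ and $\mathcal P\cup\mathcal F$ -- with the bit sizes of $\overline\sigma$ and $\overline{h_i p_i}$ additionally polynomial in those of $\sigma$, $h_i$ and $\mathcal F$ by (iii).

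\emph{Reconstruction.} Set $g:=r-\sigma-\sum_{i=1}^m h_i p_i$; since $\deg r,\deg\sigma,\deg(h_i p_i)\le 2d$ we have $\deg g\le 2d$. By (i) together with the hypothesised identity, $\overline g=\overline r-\overline\sigma-\sum_{i=1}^m\overline{h_i p_i}=0$, hence $g\in I$ by (ii). Running \texttt{grlex} division of $g$ by $\mathcal F$ therefore produces $q_1,\dots,q_t$ with $g=\sum_{i=1}^t q_i f_i$ (the remainder is $\overline g=0$), with $\deg(q_i f_i)\le\deg g\le 2d$ and total size $\poly(\ell)$ by (iii), where $\ell$ bounds the sizes of $r$, $\sigma$, the $h_i$, $\mathcal P$ and $\mathcal F$. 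Rearranging, $r=\sigma+\sum_{i=1}^m h_i p_i+\sum_{i=1}^t q_i f_i$ is an $\sos$ proof -- $\sigma\in\Sigma$ being given -- of degree $\le 2d$ and size $\poly(\ell)$, as required.

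The conceptual content is light, but two points need care. First, the $\mathbb R$-linearity and order-independence of $g\mapsto\overline g$ in (i) together with the ideal-membership criterion (ii): this is exactly where the \GB basis hypothesis is used, since for an arbitrary generating set the remainder is neither unique nor linear and $\overline g=0$ does not characterise ideal membership. Second -- and this is the crux for the later applications -- the degree and size accounting in the reconstruction: we need the certificate $g=\sum q_i f_i$ to satisfy $\deg(q_i f_i)\le 2d$, which can fail for non-graded monomial orders, so working with \texttt{grlex} (or any graded order) is essential. Verifying that the division algorithm terminates in $\poly(n)$ steps without more than polynomial coefficient growth is routine but is where the detailed bookkeeping lies.
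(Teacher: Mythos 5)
Your proof is correct and follows essentially the same approach as the paper's: both directions rest on the linearity and well-definedness of the Gröbner normal-form map $g\mapsto\overline g$, the characterization $\overline g=0\iff g\in\langle\mathcal F\rangle$, and the degree-preservation of the division algorithm under a graded order. Your Reconstruction is a small streamlining of the paper's -- you form $g=r-\sigma-\sum h_ip_i$ and run a single division of $g$ by $\mathcal F$ to produce the certificate $g=\sum q_if_i$, whereas the paper performs two divisions (of $\sigma+\sum h_ip_i$ and of $r$) and subtracts the resulting quotients -- but these are the same argument rearranged, and both rely on the same fact about \texttt{grlex} division for the degree and size bounds.
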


\begin{proof}
    \emph{Reduction.} Since polynomial reduction by a Gröbner basis is uniquely defined and linear, it forms a well-defined linear function.
    Thus, the identity $\overline{r} = \overline{\sigma} + \sum_{i=1}^m \overline{h_i p_i}$ holds. Furthermore, since $\mathcal{F}$ is a Gröbner basis with respect to the \texttt{grlex} order, then the RHS has degree at most $2d$ and size polynomial in the inputs $r, \mathcal{P}\cup\mathcal{F}$ (see also \cite{BortolottiMV25ICALP, Cox}).
        
    \emph{Reconstruction.} Consider the identity $\overline{r} = \overline{\sigma} + \sum_{i=1}^m \overline{h_i p_i}$. We show that this identity can be ``reconstructed'' to an $\sos$ proof of $r$ while preserving the degrees of $\sigma$ and $h_i p_i$, with at most a polynomial increase in size. 

    First, observe that $\overline{\sigma} + \sum_{i=1}^m \overline{h_i p_i}$ can be seen as the (unique) remainder of polynomial reduction by $\mathcal{F}$. That is, there exist polynomials $q_1,\ldots, q_t$ with $\deg(q_i f_i)\leq 2d$ for $i\in [t]$, such that
    \begin{equation*}
        \sigma + \sum_{i=1}^m h_i p_i = \sum_{i=1}^t q_i f_i + \left( \overline{\sigma} + \sum_{i=1}^m \overline{h_i p_i} \right).
    \end{equation*}
    Rearranging, we obtain
    \begin{equation*}
        \sigma + \sum_{i=1}^m h_i p_i - \sum_{i=1}^t q_i f_i = \overline{\sigma} + \sum_{i=1}^m \overline{h_i p_i} = \overline{r}.
    \end{equation*}
    Similarly, there exist polynomial $\rho_1, \ldots, \rho_t$, with $\deg(\rho_i f_i) \leq 2d$ with $i \in [t]$ such that
    \begin{align*}
        r = \sum_{i=1}^t \rho_i f_i + \overline{r}.
    \end{align*}
    Thus, we obtain
    \begin{align*}
        r = \sigma + \sum_{i=1}^m h_i p_i + \sum_{i=1}^t(\rho_i - q_i)f_i.
    \end{align*}
    The degree and size bounds follow immediately from the polynomial division algorithm with respect to the \texttt{grlex} order (see also \cite{BortolottiMV25ICALP, Cox}).
\end{proof}

\section{Invariant $\sos$ and finite orbits}\label{sec:invariant-sos-and-finite-orbits}

In this section, we introduce the natural action of a permutation group $G$ on polynomials and state the main properties of such actions. Furthermore, we analyze the action of a direct product of symmetric groups on pairs of exponent vectors with bounded total degree, and we prove that for any fixed degree $d$, the number of resulting orbits is bounded and, in fact, remains constant when $n \geq 2d$.  These structural properties will play a key role in \cref{sec:symmetric-case}, where we use them to upper bound the number of variables required to construct $\sos$ proofs under symmetry assumptions, thereby allowing for a reduction in the dimensionality of the corresponding semidefinite program.

\subsection{Preliminaries on (finite) group actions}

\begin{definition}[Group action]\label{def:group-action}
    Let $G$ be a group and $X$ be a set. A \emph{group action} of $G$ on $X$ is a function $\alpha: G \times X \rightarrow X$, denoted by $\alpha(g,x) = g \cdot x$, that satisfies the following properties: $e \cdot x = x$ for all $x\in X$, where $e\in G$ is the identity element of $G$, and $g \cdot(h \cdot x)=(gh) \cdot x$ for all $g,h\in G$ and $x\in X$.
\end{definition}

\begin{definition}[Orbits, Stabilizers, and Fixed Points]
    Let a group $G$ act on a set $X$.
    \begin{itemize}
        \item For $x \in X$, the set $Orb_x = \{g \cdot x \in X \ | \ g \in G \}$ is called the \emph{$G$-orbit} of $x$.
        \item For $x \in X$, the set $Stab_x = \{ g \in G \ | \ g \cdot x = x \}$ is called the \emph{stabilizer} of $x$ in $G$.
        \item The set of all distinct $G$-orbits of $X$ is called the \emph{quotient set} of $X$ by $G$, denoted $X/G$.
        \item An element $x \in X$ is a \emph{fixed point} if its $G$-orbit consists only of $x$, i.e., $Orb_x = \{x\}$. Equivalently, $x$ is a fixed point if its stabilizer is the entire group $G$, i.e., $Stab_x = G$.
    \end{itemize}
\end{definition}

\begin{remark}
    The group action induces an equivalence relation $\sim$ on $X$, where $x_i \sim x_j$ if and only if $x_j = g \cdot x_i$ for some $g \in G$. The equivalence classes of this relation are precisely the $G$-orbits. In addition, for each $x \in X$, the stabilizer $Stab_x$ is a subgroup of $G$. Furthermore, for a finite group $G$ acting on a set $X$, the Orbit-Stabilizer Theorem relates the cardinalities of $G, Orb_x$ and $Stab_x$ as follows  
    \begin{equation*}
        |G| = |Orb_x| \cdot |Stab_x| \qquad \forall x\in X.  
    \end{equation*}
\end{remark}

In what follows, we consider a specific type of group action relevant to polynomials, i.e., the action of a permutation group on the variables of a polynomial, that we now define formally. Given multi-index $\alpha\in \mathbb N^n$, we let $|\alpha|:= \| \alpha\|_1=\sum_{i=1}^n\alpha_i$. For $x = (x_1, \ldots, x_n) \in \mathbb R^n$ and a multi-index $\alpha = (\alpha_1, \ldots, \alpha_n)$, we write $x^{\alpha} = x_1^{\alpha_1} \cdots x_n ^{\alpha_n}$.

\begin{definition}[Action on Monomials and Polynomials]\label{def:action-on-poly}
    Let $G$ be a finite group that can act on the indices $1, \ldots, n$ (e.g., a subgroup of $S_n$). For $g \in G$ and a multi-index $\alpha = (\alpha_1, \ldots, \alpha_n) \in \mathbb{N}^n$, we define the action of $g$ on $\alpha$ as $g \cdot \alpha = (\alpha_{g^{-1}(1)}, \ldots, \alpha_{g^{-1}(n)})$.
    For a monomial $m = x^{\alpha}$, the \emph{action of $G$ on $m$} is defined for any $g \in G$ as:
    \begin{align*}
        g \cdot m := x^{g \cdot \alpha} = x_1^{\alpha_{g^{-1}(1)}} \cdots x_n^{\alpha_{g^{-1}(n)}} = x_{g(1)}^{\alpha_1} \cdots x_{g(n)}^{\alpha_n}.
    \end{align*}
    This action extends linearly to the set of polynomials $\mathbb{R}[x_1, \ldots, x_n]$. Thus, for a polynomial $p(x) = \sum_{i=0}^m a_i x^{\alpha_i}$, the action of $g \in G$ on $p$ is defined as $g \cdot p = \sum_{i=0}^m a_i (g \cdot x^{\alpha_i})$.
    
    A polynomial p is said to be \emph{$G$-invariant} if $g \cdot p = p$ for all $g \in G$. If $G=S_n$, where $S_n$ is the symmetric group on $n$ elements, the polynomial is called \emph{symmetric}.
\end{definition}

Note that the group action on polynomials, as defined above, is compatible with the algebraic structure of the polynomial ring, i.e., it distributes over addition and respects multiplication. In other words, applying a group element to a sum or product of polynomials yields the sum or product of the transformed polynomials.

\begin{lemma}\label{th:group_action_additivity_multiplicativity}
    Let $G$ be a finite group acting on polynomials as defined above. For any two polynomials $p,q \in \mathbb{R}[x_1, \ldots, x_n]$ and any $g \in G$, the action satisfies:
    \begin{itemize}
        \item \textit{Additivity:} $g \cdot (p+q) = g \cdot p + g \cdot q$.
        \item \textit{Multiplicativity:} $g \cdot (pq)  = (g \cdot p)(g \cdot q)$.
    \end{itemize}
\end{lemma}
\begin{proof}
    Additivity follows directly from the linear extension of the action from monomials to polynomials.
    
    For multiplicativity, consider two monomials $x^{\alpha}$ and $x^{\beta}$. Their product is $x^{\alpha + \beta}$. The action of $g$ on the product is $g \cdot x^{\alpha + \beta}=x^{g \cdot (\alpha + \beta)}$.
    Observe that $g \cdot (\alpha + \beta) = (\alpha_{g^{-1}(1)} + \beta_{g^{-1}(1)}, \ldots, \alpha_{g^{-1}(n)} + \beta_{g^{-1}(n)}) = g \cdot \alpha + g \cdot \beta$ component-wise. Therefore, for the product we have $(g \cdot x^{\alpha}) (g \cdot x^{\beta}) = x^{g \cdot \alpha} x^{g \cdot \beta} = x^{g \cdot \alpha + g \cdot \beta} = x^{g \cdot (\alpha + \beta)} = g \cdot x^{\alpha + \beta} = g \cdot (x^{\alpha} x^{\beta})$. Multiplicativity for general polynomials follows by linearity. 
\end{proof}

\begin{remark}
    Although we define group actions on polynomials via subgroups of the symmetric group $S_n$, acting by permuting variables, this framework extends naturally to arbitrary finite groups. By Cayley’s Theorem, every finite group $G$ is isomorphic to a subgroup of $S_{|G|}$, implying that any finite group action can be represented as a permutation action. This justifies our focus on permutation subgroups of $S_n$ when considering actions on polynomials in $n$ variables.
\end{remark}

Group actions provide a formal framework for describing symmetries and are particularly useful in the study of polynomials. We introduce some algebraic notions and state the properties that will be used in the proofs of our main results (see also \cite{DummitF04}).

Let now $\MS^n$ be the space of real symmetric matrices of dimension $n\times n$, endowed with the inner product $\langle X,Y\rangle:=\mathrm{Tr}(XY)$. A matrix $Q\in \MS^n$ is \emph{positive semidefinite}, denoted $Q \succeq 0$, if $x^TQx\geq 0$ for all $x\in \mathbb{R}^n$.
Consider a polynomial $p \in \mathbb R[x_1,\dots,x_n]_{2d}$ for fixed $d \in \mathbb{N}$. As observed in \cite{CTR1994}, there exists $Q \in \MS^{\omega_n^d}$ such that $p$ can be written as $p = \langle Q,\x_d\x_d^\top\rangle$, where $\omega_n^d= \binom{n+d}{d}$ is the number of elements in the monomial basis of degree at most $d$. In some cases, the symmetric matrix $Q$ has some useful properties, as shown below.

\begin{definition}\label{def:action-on-matrices}
    Let $S_n$ be the symmetric group of permutations of $n$ elements. Every permutation $\pi \in S_n$ of the indices of $x_1, \dots, x_n$ induces a permutation $\pi'\in S_{\omega_n^d}$ on the monomials in the monomial basis $\x_d$. Consider the permutation matrix $P_{\pi'}$ associated to $\pi'$, and let $Q\in \MS^{\omega_n^d}$ be a symmetric matrix whose entries are indexed by the monomial basis $\x_d$, i.e. $Q=(Q_{x^\alpha, x^\beta})$ for $x^\alpha, x^\beta$ entries of $\x_d$. The \emph{action} of $\pi \in S_n$ on $Q$ is given by
    \begin{equation*}
        \pi \star Q = P_{\pi'}Q P_{\pi'}^\top.
    \end{equation*}
    Equivalently, the entries of $ \pi \star Q $ satisfy $(\pi \star Q)_{x^\alpha, x^\beta} = Q_{\pi \cdot x^\alpha,\, \pi \cdot x^\beta}$. Moreover, for any $G$ subgroup of $S_n$, we say that $Q$ is \emph{$G$-invariant} if $\pi \star Q = Q$ for every $\pi \in G$.
\end{definition}

This definition gives rise to several important properties:

\begin{lemma}\label{lemma:action-on-PSD-matrix}
    Let $d\in O(1)$ be an integer and let $S_n$ be the symmetric group of $n$ elements.
    \begin{enumerate}
        \item For $Q$ positive semidefinite matrix and for every $\pi\in S_n$, $Q'=\pi \star Q$ is positive semidefinite. 
        \item For $p\in \mathbb R[x_1,\dots,x_n]_{2d}$ polynomial of degree at most $2d$, and $Q\in \MS^{\omega_n^d}$ such that $p=\langle Q,\x_d\x_d^\top \rangle$, the action of $\pi\in S_n$ on $p$ is given by $\pi \cdot p = \langle \pi \star Q, \x_d \x_d^\top\rangle$.
    \end{enumerate}
\end{lemma}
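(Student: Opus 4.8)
The plan is to prove the two items separately; both are short and rest on the orthogonality of permutation matrices together with \cref{th:group_action_additivity_multiplicativity}.

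For item~1, I would argue by orthogonal congruence. Since $P_{\pi'}$ is a permutation matrix it is orthogonal, so $P_{\pi'}^{\top}=P_{\pi'}^{-1}$ and $\pi\star Q=P_{\pi'}QP_{\pi'}^{\top}$ is a congruence of $Q$. Concretely, for an arbitrary $y\in\mathbb{R}^{\omega_n^d}$ set $z:=P_{\pi'}^{\top}y$; then
\[
y^{\top}(\pi\star Q)\,y \;=\; (P_{\pi'}^{\top}y)^{\top}Q\,(P_{\pi'}^{\top}y)\;=\;z^{\top}Qz\;\ge\;0,
\]
using $Q\succeq 0$ in the last step. As $y$ was arbitrary, $\pi\star Q\succeq 0$; the same computation shows that $\star$ maps the PSD cone into itself, in particular for any subgroup $G\le S_n$.

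For item~2, I would expand $p$ in the monomial basis $\x_d=(x^{\alpha})_{|\alpha|\le d}$. By definition of the trace inner product, $p=\langle Q,\x_d\x_d^{\top}\rangle=\sum_{\alpha,\beta}Q_{x^{\alpha},x^{\beta}}\,x^{\alpha}x^{\beta}$. Applying $\pi$ and using additivity and multiplicativity (\cref{th:group_action_additivity_multiplicativity}) gives $\pi\cdot p=\sum_{\alpha,\beta}Q_{x^{\alpha},x^{\beta}}(\pi\cdot x^{\alpha})(\pi\cdot x^{\beta})$. On the other hand, by the entry-wise description in \cref{def:action-on-matrices},
\[
\langle \pi\star Q,\x_d\x_d^{\top}\rangle \;=\; \sum_{\alpha,\beta}(\pi\star Q)_{x^{\alpha},x^{\beta}}\,x^{\alpha}x^{\beta}\;=\;\sum_{\alpha,\beta}Q_{\pi\cdot x^{\alpha},\,\pi\cdot x^{\beta}}\,x^{\alpha}x^{\beta}.
\]
To match the two expressions I would reindex the latter sum by $u=\pi\cdot x^{\alpha}$, $v=\pi\cdot x^{\beta}$: since $\pi$ merely relabels variables it preserves total degree, so $x^{\alpha}\mapsto\pi\cdot x^{\alpha}$ is a bijection of the degree-$\le d$ monomials (this is precisely the induced basis permutation $\pi'$), and $x^{\alpha}=\pi^{-1}\cdot u$, $x^{\beta}=\pi^{-1}\cdot v$. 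Collapsing the reindexed sum via \cref{th:group_action_additivity_multiplicativity} then produces exactly $\pi\cdot p$ (equivalently, one can package this as $\pi\cdot\x_d=P_{\pi'}\x_d$ and use cyclicity of the trace, $\langle Q,P_{\pi'}\x_d\x_d^{\top}P_{\pi'}^{\top}\rangle=\langle P_{\pi'}^{\top}QP_{\pi'},\x_d\x_d^{\top}\rangle$).

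The only genuinely delicate point is the bookkeeping of conventions: one must pin down the orientation of the induced basis permutation $\pi'$ (and hence of $P_{\pi'}$) consistently with \cref{def:action-on-poly,def:action-on-matrices}, so that the reindexing yields $\pi\cdot p$ rather than $\pi^{-1}\cdot p$. Since $G$ is closed under inversion this choice is immaterial for every downstream use (e.g.\ restricting the $\sos$ search to $G$-invariant PSD matrices), but I would fix it explicitly at the outset. Everything else is routine linear algebra.
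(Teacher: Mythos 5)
Your proof is correct and follows essentially the same route as the paper: item~1 via orthogonal congruence by $P_{\pi'}$, and item~2 by pushing $\pi$ through the quadratic form and invoking cyclicity of the trace (your entry-wise expansion is a minor elaboration of the same calculation, and you even note the packaged trace argument explicitly). Your flag about pinning down the orientation of $P_{\pi'}$ (so the congruence yields $\pi\cdot p$ rather than $\pi^{-1}\cdot p$) is well taken; the paper's computation in fact passes through $P_{\pi'}^{\top}QP_{\pi'}$ before identifying it with $\pi\star Q$, so the convention does need to be fixed consistently, though as you observe this is harmless for every downstream use since $G$ is closed under inversion.
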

\begin{proof}
    (1) Let $Q \succeq 0$ be a positive semidefinite matrix. Let $\pi \in S_n$ be a permutation and consider the matrix $\pi \star Q = P_{\pi'} Q P_{\pi'}^\top$. 
    Since $P_{\pi'}$ is a permutation matrix, it is orthogonal, meaning that $P_{\pi'}^\top = P_{\pi'}^{-1}$. Let $v \in \mathbb{R}^{\omega_n^d}$ and define $w = P_{\pi'}^\top v$. Then we get
    \begin{equation*}
        v^\top (\pi \star Q) v = v^\top P_{\pi'} Q P_{\pi'}^\top v = (P_{\pi'}^\top v)^\top Q (P_{\pi'}^\top v) = w^\top Q w \geq 0,
    \end{equation*}
    since $Q \succeq 0$. We can also conclude $\pi \star Q \succeq 0$.

    (2) Suppose $p = \langle Q, \x_d\x_d^\top\rangle$ is a polynomial of degree at most $2d$. The action of $\pi \in S_n$ on the polynomial $p$ is defined by permuting the variables in the monomial basis $\x_d$, so $\pi \cdot \x_d = P_{\pi'} \x_d$. Then:
    \begin{align*}
        \pi \cdot p &= \langle Q, (\pi \cdot \x_d)(\pi \cdot \x_d)^{\top}\rangle = \langle Q, (P_{\pi'} \x_d) (P_{\pi'} \x_d)^{\top} \rangle \\
        &= \langle Q, P_{\pi'} \x_d \x_d^{\top} P_{\pi'}^{\top} \rangle = \langle P_{\pi'}^{\top} Q P_{\pi'}, \x_d \x_d^{\top} \rangle = \langle \pi \star Q, \x_d \x_d^{\top} \rangle,
    \end{align*}
    where the first equality follows from the properties of additivity and multiplicativity of group actions on polynomials, and the fourth equality follows from the characterization of Frobenius inner products as $\langle A, B \rangle = Tr(A^{\top}B)$ and the property of cyclicity of the trace.
\end{proof}

\subsection{Group actions on $\sos$ proofs}

In this section we examine how group actions interact with Sum-of-Squares proofs, establishing that invariance properties are preserved throughout the proof system. We demonstrate that if a polynomial is $G$-invariant, its $\sos$ representation can be chosen to respect this symmetry through an averaging construction using the Reynolds operator. This structural preservation enables us to work within the reduced-dimensional space of invariant polynomials, a key insight that will be crucial for bounding the bit complexity of symmetric $\sos$ proofs in \cref{sec:symmetric-case}.

\begin{proposition}\label{prop:invariant-sos}
    Let $p\in \mathbb{R}[x_1, \dots, x_n]_{2d}$ be a polynomial and let $G$ be a finite group. Assume that $p$ is $G$-invariant. Then, $p=\langle \overline{Q}, \x_d \x_d^\top\rangle$ for some $G$-invariant matrix $\overline{Q}$. 
    In addition, if $p$ is a sum of squares, then  $\overline{Q}$ can be taken positive semidefinite.
\end{proposition}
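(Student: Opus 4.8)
The plan is to use the averaging construction via the Reynolds operator $R_G(Q) := \frac{1}{|G|}\sum_{\pi \in G} \pi \star Q$. First I would pick any symmetric matrix $Q \in \mathcal{S}^{\omega_n^d}$ such that $p = \langle Q, \x_d\x_d^\top\rangle$; such a $Q$ exists by the standard Gram-matrix representation recalled before \cref{def:action-on-matrices}. I would then set $\overline{Q} := R_G(Q)$ and verify the two required properties. For $G$-invariance of $\overline{Q}$: for any $\tau \in G$, using that $\star$ is a group action on matrices (which follows from $\pi \mapsto P_{\pi'}$ being a homomorphism and $\tau \star (\pi \star Q) = P_{\tau'}P_{\pi'} Q P_{\pi'}^\top P_{\tau'}^\top = P_{(\tau\pi)'} Q P_{(\tau\pi)'}^\top = (\tau\pi)\star Q$), we get $\tau \star \overline{Q} = \frac{1}{|G|}\sum_{\pi\in G}(\tau\pi)\star Q = \frac{1}{|G|}\sum_{\pi'\in G}\pi'\star Q = \overline{Q}$, by reindexing the sum over the group.

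Next I would check that $p = \langle \overline{Q}, \x_d\x_d^\top\rangle$. By \cref{lemma:action-on-PSD-matrix}(2), for each $\pi \in G$ we have $\langle \pi \star Q, \x_d\x_d^\top\rangle = \pi \cdot p$, and since $p$ is $G$-invariant, $\pi \cdot p = p$. Therefore, by linearity of the Frobenius inner product,
\begin{equation*}
    \langle \overline{Q}, \x_d\x_d^\top\rangle = \frac{1}{|G|}\sum_{\pi\in G}\langle \pi \star Q, \x_d\x_d^\top\rangle = \frac{1}{|G|}\sum_{\pi\in G} p = p,
\end{equation*}
as desired. This establishes the first part of the proposition.

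For the second part, suppose $p$ is a sum of squares. Then I would start from a \emph{positive semidefinite} Gram matrix $Q \succeq 0$ with $p = \langle Q, \x_d\x_d^\top\rangle$ (such a $Q$ exists precisely because $p \in \Sigma_{2d}$, by the well-known correspondence between $\sos$ polynomials and PSD Gram matrices). Applying the same averaging, $\overline{Q} = \frac{1}{|G|}\sum_{\pi\in G}\pi \star Q$ is again $G$-invariant and satisfies $p = \langle \overline{Q}, \x_d\x_d^\top\rangle$ by the argument above. Moreover, by \cref{lemma:action-on-PSD-matrix}(1), each $\pi \star Q$ is positive semidefinite, and since the PSD cone is convex, the average $\overline{Q}$ is positive semidefinite as well. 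This completes the proof.

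I do not anticipate a serious obstacle here: the result is essentially a direct application of Reynolds averaging, and all the needed ingredients (the action $\star$ being a genuine group action, its compatibility with the Gram representation via \cref{lemma:action-on-PSD-matrix}, and convexity/invariance of the PSD cone under $\star$) are already in place. The only point requiring minor care is the bookkeeping of the reindexing $\pi \mapsto \tau\pi$ in the invariance check and making explicit that $\star$ respects composition in $G$; since $P_{\pi'}$ is defined via the homomorphism $S_n \to S_{\omega_n^d}$, $\pi \mapsto \pi'$, this composition property is immediate.
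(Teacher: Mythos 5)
Your proof is correct and follows essentially the same route as the paper: choose a (PSD) Gram matrix $Q$, average via $\overline{Q}=\frac{1}{|G|}\sum_{g\in G} g\star Q$, verify $G$-invariance by the reindexing $g\mapsto\tau g$, use \cref{lemma:action-on-PSD-matrix}(2) together with $G$-invariance of $p$ to check $p=\langle\overline{Q},\x_d\x_d^\top\rangle$, and invoke \cref{lemma:action-on-PSD-matrix}(1) plus convexity of the PSD cone for the second part. The only difference is that you spell out explicitly that $\star$ respects composition (a fact the paper uses implicitly in its reindexing step), which is a harmless and welcome addition.
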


\begin{proof}
   Let $p\in \mathbb{R}[x_1, \dots, x_n]_{2d}$ be a polynomial and let $Q$ be such that $p = \langle Q, \x_d \x_d^\top \rangle$. Let now $G$ be a finite group acting on the variables $x_1, \dots, x_n$, and suppose that $p$ is invariant under the action of $G$.
    
    Let us define a new matrix $\overline{Q}$ as 
    \begin{equation}\label{eq:invariant-poly-avg}
        \overline{Q} := \frac{1}{|G|} \sum_{g \in G} g \star Q,
    \end{equation}
    where $g \star Q = P_{g'} Q P_{g'}^\top$ is the action of $g$ on $Q$ as in \cref{def:action-on-matrices}.
    Observe that matrix $\overline{Q}$ is invariant under the action of $G$. Indeed, for any $h \in G$, it holds
    \begin{equation*}
        h \star \overline{Q} = \frac{1}{|G|} \sum_{g \in G} h \star (g \star Q) = \frac{1}{|G|} \sum_{g \in G} (hg) \star Q = \overline{Q}, 
    \end{equation*}
    since the set $\{hg \mid g \in G\}$ is just a reindexing of $G$. 
    It also holds $p = \langle \overline{Q}, \x_d \x_d^\top \rangle$. In fact, since $p = \langle Q, \x_d \x_d^\top \rangle$ and $p$ is $G$-invariant, for all $g \in G$ we have 
    \begin{equation*}
        p = g \cdot p = \langle g \star Q, \x_d \x_d^\top \rangle. 
    \end{equation*}
    Therefore,
    \begin{equation*}
        p = \frac{1}{|G|} \sum_{g \in G} \langle g \star Q, \x_d \x_d^\top \rangle = \left\langle \frac{1}{|G|} \sum_{g \in G} g \star Q, \x_d \x_d^\top \right\rangle = \langle \overline{Q}, \x_d \x_d^\top \rangle. 
    \end{equation*}
    Finally, let now $\sigma\in \mathbb{R}[x_1, \dots, x_n]_{2d}$ be a $G$-invariant sum-of-squares and let $Q\succeq 0$ be such that $\sigma = \langle Q, \x_d \x_d^\top \rangle$. 
    Observe that each $g \star Q$ in \cref{eq:invariant-poly-avg} is then positive semidefinite since $Q \succeq 0$ and the conjugation by the orthogonal matrix $P_{g'}$ preserves positive semidefiniteness, as shown in \cref{lemma:action-on-PSD-matrix}. Since the sum of PSD matrices and scalar multiples of PSD matrices are still PSD, it follows that $\overline{Q} \succeq 0$.
\end{proof}

A convenient way to think of the construction in the proof of Proposition~\ref{prop:invariant-sos} is as follows.  Consider any polynomial $f\in\mathbb{R}[x_1,\dots,x_n]$, a map that sends $f$ to $\frac{1}{|G|}\sum_{g\in G} g\cdot f$ gives a linear projection of $\mathbb{R}[x_1,\dots,x_n]$ onto the subspace of $G$‑invariant polynomials. At the matrix level, this is exactly the map
\begin{equation*}
    Q \mapsto \overline Q = \frac{1}{|G|}\sum_{g\in G} g\star Q
\end{equation*}
which we applied above to construct a matrix invariant under the action of $G$.  This averaging map is known in invariant theory as the \emph{Reynolds operator}. We now give its formal definition.

\begin{definition}
    Let $G$ be a finite group acting on the polynomial ring $\mathbb R[x_1,\dots,x_n]$. The \emph{Reynolds operator} $R_G \colon \mathbb R[x_1,\dots,x_n] \to \mathbb R[x_1,\dots,x_n]$ is defined as
    \begin{equation*}
        R_G(f) = \frac{1}{|G|} \sum_{g\in G} g\cdot f.
    \end{equation*}
\end{definition}

\begin{remark}\label{rem:avg-sos-is-sos}
    Note that, given a polynomial $f\in \mathbb R[x_1,\dots,x_n]$, the polynomial $R_G(f)$ is $G$-invariant. Moreover, if $f$ is also $G$-invariant, then $f = R_G(f)$. In addition, the Reynolds operator preserves the sum-of-squares property, that is, if $f$ is a sum-of-squares polynomial, then $R_G(f)$ remains a sum of squares. More precisely, for $f$ sum of squares, there exists a positive semidefinite matrix $Q$ such that $f=\langle Q,\x_d \x_d^\top\rangle$. Then
    \begin{align*}
        R_G(f) = \frac{1}{|G|} \sum_{g\in G} g \cdot f = \frac{1}{|G|} \sum_{g\in G} \langle g \star Q,\x_d \x_d^\top\rangle = \left\langle \frac{1}{|G|} \sum_{g \in G} g \star Q, \x_d \x_d^\top \right\rangle = \langle \overline{Q}, \x_d \x_d^\top \rangle, 
    \end{align*}
    where $\overline{Q} = \frac{1}{|G|} \sum_{g \in G} g \star Q$. Since each matrix $g \star Q$ is positive semidefinite by \cref{lemma:action-on-PSD-matrix}, and the convex combination of positive semidefinite matrices remains positive semidefinite, it follows that $\overline{Q} \succeq 0$. Therefore, $R_G(f) = \langle \overline{Q}, \x_d \x_d^\top \rangle$ is itself a sum of squares.
\end{remark}

We now further expand our setting from $G$-invariant polynomials to $G$-invariant systems of polynomials that exhibit the same invariance property.

\begin{definition}[Invariant systems]\label{def:invsys}
    Let $G$ be a finite group, and let $\mathcal{P} \subseteq \mathbb{R}[x_1, \ldots, x_n]$ be a system of polynomials. We say that $\mathcal{P}$ is \emph{$G$-invariant} if it is closed under the action of $G$, that is, for every $g \in G$ and every $p \in \mathcal{P}$, we have $g \cdot p \in \mathcal{P}$. The set of $G$-orbits of $\mathcal{P}$ is denoted as $\mathcal{P}/G$.
\end{definition}

\begin{proposition}\label{th:invariant-system-averaged-sos}
    Let $G$ be a finite group and let $d$ be a fixed integer. Assume that the polynomials  $f,p_1, \dots, p_m$ and the set $\mathcal R=\{r_1, \dots, r_{\ell}\}$ are $G$-invariant. If there exists an $\sos$ proof of degree $2d$, $f=\sigma + \sum_{i=1}^m h_ip_i + \sum_{j=1}^{\ell} q_jr_j$, then there exists an $\sos$ proof of degree $2d$ of the form
    \begin{equation*}
        f=\tilde{\sigma} + \sum_{i=1}^m\tilde{h_i}p_i + \sum_{j=1}^{\ell} q_j'r_j,
    \end{equation*}    
    where the sum-of-squares $\tilde{\sigma}$ and each $\tilde{h}_i\in \mathbb R[x_1,\dots,x_n]$ are $G$-invariant. 
\end{proposition}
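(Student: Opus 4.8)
The plan is to apply the Reynolds operator $R_G$ to both sides of the given degree-$2d$ $\sos$ proof and exploit its linearity together with the multiplicativity of the group action. Since $f$ is $G$-invariant, we have $R_G(f)=f$ by \cref{rem:avg-sos-is-sos}. On the right-hand side, linearity of $R_G$ gives $f=R_G(\sigma)+\sum_{i=1}^m R_G(h_ip_i)+\sum_{j=1}^{\ell}R_G(q_jr_j)$. The first step is to set $\tilde\sigma:=R_G(\sigma)$; by \cref{rem:avg-sos-is-sos} this is again a sum of squares, it is $G$-invariant, and since $R_G$ does not increase degree (it is an average of degree-preserving maps), $\deg\tilde\sigma\le 2d$.

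The second step handles the terms $R_G(h_ip_i)$. Here I would expand $R_G(h_ip_i)=\frac{1}{|G|}\sum_{g\in G} g\cdot(h_ip_i)=\frac{1}{|G|}\sum_{g\in G}(g\cdot h_i)(g\cdot p_i)$ using \cref{th:group_action_additivity_multiplicativity}. The key observation is that because $\mathcal{P}$ (equivalently, the set $\{p_1,\dots,p_m\}$ viewed as $G$-invariant) is closed under the action of $G$, each $g\cdot p_i$ equals some $p_{\pi_g(i)}$ for a permutation $\pi_g$ of the index set depending on $g$. Grouping the sum over $g$ according to which $p_k$ appears, we can rewrite $\sum_i R_G(h_ip_i)=\sum_{k=1}^m \tilde h_k p_k$ where $\tilde h_k:=\frac{1}{|G|}\sum_{g:\,g\cdot p_{i(g,k)}=p_k}(g\cdot h_{i(g,k)})$ is a genuine polynomial of degree at most $\deg h_i$, hence $\deg(\tilde h_k p_k)\le 2d$. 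One must check these $\tilde h_k$ are $G$-invariant: applying $h\in G$ permutes the summation set and the group elements consistently (using $h\cdot(g\cdot h_i)=(hg)\cdot h_i$), so $h\cdot\tilde h_k=\tilde h_k$. The identical argument applied to the $\sos$ terms $\sum_j R_G(q_jr_j)$ yields $\sum_j q_j' r_j$ for some polynomials $q_j'$ (the statement only requires these to be polynomials, not invariant or $\sos$, so no extra care is needed, though in fact the same invariance holds).

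The main obstacle — really a bookkeeping subtlety rather than a deep one — is handling the reindexing carefully when $\mathcal P$ or $\mathcal R$ contains repeated polynomials or when distinct generators lie in the same $G$-orbit: one must fix, for each orbit, a representative and a consistent choice of coset transversal so that the collected multipliers are well defined, and then verify that the reassembled identity is literally the original polynomials $p_i,r_j$ (not merely ideal-equivalent ones). A clean way to avoid ambiguity is to phrase the collection orbit by orbit: for each orbit $\mathcal O\subseteq\{p_1,\dots,p_m\}$ pick $p\in\mathcal O$, write its stabilizer $\mathrm{Stab}_p$, choose coset representatives $g_1,\dots,g_s$ of $G/\mathrm{Stab}_p$ so that $\mathcal O=\{g_1\cdot p,\dots,g_s\cdot p\}$, and then the contribution of all generators in $\mathcal O$ to $\sum_i R_G(h_ip_i)$ rearranges into $\sum_{t=1}^s \tilde h_{g_t\cdot p}\,(g_t\cdot p)$ with each $\tilde h_{g_t\cdot p}$ manifestly $G$-invariant by the averaging. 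Assembling the three pieces gives the claimed form $f=\tilde\sigma+\sum_i\tilde h_ip_i+\sum_j q_j'r_j$ with $\tilde\sigma$ a sum of squares, all $\tilde h_i$ $G$-invariant, and all degrees bounded by $2d$, completing the proof.
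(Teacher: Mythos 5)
Your use of the Reynolds operator and the overall structure match the paper's proof, but there is a genuine error in the way you handle the $p_i$ terms, stemming from a misreading of the hypothesis. The proposition assumes that each polynomial $p_i$ is individually $G$-invariant (i.e., $g\cdot p_i = p_i$ for all $g$), whereas only the \emph{set} $\mathcal{R}$ is assumed closed under the action. You instead treat $\{p_1,\dots,p_m\}$ as merely $G$-invariant as a set and run the orbit/coset-transversal argument on it. Under that weaker hypothesis your conclusion is false: if $\tilde h_k$ is defined by collecting the averaged multipliers attached to $p_k$, then for $h\in G$ one computes $h\cdot\tilde h_k = \tilde h_{\,h\cdot p_k}$, so the multipliers are permuted along the orbit rather than fixed. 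They are invariant only under $\mathrm{Stab}_{p_k}$, and your claim that each $\tilde h_{g_t\cdot p}$ is ``manifestly $G$-invariant by the averaging'' is simply not true when the orbit has more than one element. (Your parenthetical remark that ``the same invariance holds'' for the $q_j'$ is wrong for the same reason, though harmless there since the statement does not require it.)

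The fix is to use the actual hypothesis: because each $p_i$ is pointwise invariant, one can pull $p_i$ out of the Reynolds average, $R_G(h_i p_i) = \frac{1}{|G|}\sum_g (g\cdot h_i)(g\cdot p_i) = p_i\cdot\frac{1}{|G|}\sum_g (g\cdot h_i) = R_G(h_i)\,p_i$, and set $\tilde h_i := R_G(h_i)$, which is $G$-invariant by construction (this is exactly what the paper does). The orbit-regrouping machinery you describe is the right tool only for the $r_j$ terms, where no invariance of the $q_j'$ is claimed; applying it uniformly to both families obscures the asymmetry that makes the proposition true.
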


\begin{proof}
    Assume there exists an $\sos$ proof of polynomial $f$ of the form 
    \begin{equation}\label{eq:invariant-system-averaged-sos}
        f=\sigma + \sum_{i=1}^m h_ip_i + \sum_{j=1}^{\ell} q_jr_j.
    \end{equation} 
    Applying the Reynolds operator to both sides of identity (\ref{eq:invariant-system-averaged-sos}), we obtain the following identity
    \begin{equation*}
        R_G(f)= R_G(\sigma + \sum_{i=1}^m h_ip_i + \sum_{j=1}^{\ell} q_jr_j)
    \end{equation*}
    We recall that $R_G$ fixes every $G$-invariant polynomial (see \cref{rem:avg-sos-is-sos}). Further, since $f,p_1,\dots,p_m$ are $G$-invariant and $R_G$ is linear, we get
    \begin{align*}
        f = R_G(f) &= R_G(\sigma) + R_G(\sum_{i=1}^m h_ip_i ) + R_G(\sum_{j=1}^{\ell} q_jr_j)\\
        &= R_G(\sigma) + \sum_{i=1}^m \frac{1}{|G|}\sum_{g\in G}g\cdot( h_ip_i) + \sum_{j=1}^{\ell} R_G(q_jr_j)\\
        &= R_G(\sigma) + \sum_{i=1}^m \frac{1}{|G|}\sum_{g\in G}(g\cdot p_i)(g\cdot h_i) + \sum_{j=1}^{\ell} R_G(q_jr_j)\\
        &= R_G(\sigma) + \sum_{i=1}^m \frac{1}{|G|}p_i\sum_{g\in G}(g\cdot h_i) + \sum_{j=1}^{\ell} R_G(q_jr_j)\\
        &= R_G(\sigma) + \sum_{i=1}^m R_G(h_i) p_i + \sum_{j=1}^{\ell} R_G(q_jr_j)
    \end{align*}
    Then, the previous equation reduces to
    \begin{equation*}
        f=\tilde{\sigma} + \sum_{i=1}^m \tilde{h_i}p_i + \sum_{j=1}^{\ell} R_G(q_jr_j).
    \end{equation*}
    where $\tilde{\sigma} := R_G(\sigma)$, $\tilde{h}_i := R_G(h_i)$ for $i\in [m]$ are $G$-invariant and, by \cref{rem:avg-sos-is-sos}, $\tilde{\sigma}$ is a sum of squares.
    Consider now the sum
    \begin{equation*}
        \sum_{j=1}^{\ell} R_G(q_j r_j) = \sum_{j=1}^{\ell} \frac{1}{|G|} \sum_{g \in G} g \cdot (q_jr_j) = \sum_{j=1}^{\ell} \frac{1}{|G|} \sum_{g \in G} (g \cdot q_j)(g \cdot r_j),
    \end{equation*}
    where each $g \cdot r_j \in \mathcal R$ since $\mathcal R$ is $G$-invariant by assumption. 
    Thus $\sum_{j=1}^{\ell} R_G(q_j r_j)$ is a sum of the form $\sum_{k=1}^{\ell} q_k'r_k$ where $r_k\in \mathcal R$ and the polynomial coefficients $q_k'$ are given by
    \begin{equation*}
        q_k'= \frac{1}{|G|}  \sum_{j\in [\ell], g\in G, g \cdot r_j=r_k} (g \cdot q_j).
    \end{equation*}
\end{proof}

\subsection{Orbit counting under symmetry groups}\label{sect:counting_orbits}

The two technical lemmas presented in this section are key components in the proof strategy of our main results in \cref{sec:symmetric-case}. These lemmas are used to rigorously bound the number of variables involved in the semidefinite programs that characterize $\sos$ proofs and refutations under symmetry assumptions. Specifically, they enable us to leverage group symmetries to reduce the dimension of the SDP by counting the number of orbits under the group action. This orbit-counting argument is essential for ensuring that the dimension of the space in which the SDP solution lies is independent of the number of variables $n$. This invariance is precisely what allows us to apply \cref{th:porkolab-invariant-SDP+LP} and derive polynomial bounds on the bit size of $\sos$ proofs and refutations as formalized in \cref{th:O(1)_symmetric_constraint_sos-proofs} and \cref{th:invariant_systems_refutation}.

\begin{lemma}\label{th:orbits_pairs_indices_under_symmetric_action}
    Let $k=O(1)$ be a fixed positive integer. Let $n_1, n_2, \ldots, n_k \in \mathbb{N}$ be such that $\sum_{i=1}^k n_i = n$. Consider the group $G = S_{n_1} \times S_{n_2} \times \ldots \times S_{n_k}$, which acts on the indices $1,2,\ldots,n$ by permuting the indices within each block $1, \ldots, n_1, n_1+1, \ldots, n_1 + n_2, \ldots, n$. Consider the sets $W = \{x \in \mathbb{N}^n\ | \ \sum_{i=1}^n x_i \leq d \}$ and $Y = \{(x,y) \in \mathbb{N}^n \times \mathbb{N}^n \ | \ \sum_{i=1}^n x_i \leq d, \sum_{i=1}^n y_i \leq d \}$.
    Let $x\in W$ and  $(x,y) \in Y$, let $g \in G$ act on $x$ as in \cref{def:action-on-poly} and define the action of $g$ on $(x,y)$ as
    \begin{equation*}
        g \cdot (x,y) = (g \cdot x, g \cdot y) = \left( (x_{g(1)}, x_{g(2)}, \ldots, x_{g(n)}), (y_{g(1)}, y_{g(2)}, \ldots, y_{g(n)} \right).
    \end{equation*}
    Then, for a fixed nonnegative integer $d$, the number of orbits of the action of $G$ on $W$ and $Y$ is bounded by a constant that depends only on $d$. Specifically, for $n \geq 2d$, the number of orbits is constant with respect to $n$.
\end{lemma}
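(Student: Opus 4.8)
The plan is to count the $G$-orbits by encoding each element of $W$ (resp. $Y$) via the multiset of ``profiles'' it induces inside each block, and then to observe that the total information is bounded by a function of $d$ alone once $n$ is large enough. Recall that $G = S_{n_1} \times \cdots \times S_{n_k}$ acts block-wise, so an orbit of $G$ on $W$ is determined by, for each block $b \in [k]$, the multiset of the $n_b$ nonnegative integers appearing in the coordinates of that block (reordering within a block is free). Since $\sum_i x_i \le d$, at most $d$ of these integers are nonzero across \emph{all} blocks, and each nonzero value is between $1$ and $d$; the remaining coordinates are $0$. Hence the data of an orbit is captured by choosing, for each block $b$, a multiset $S_b$ of positive integers with $|S_b| \le d$ and $\sum_{v \in S_b} v \le d$, together with the requirement that the block has at least $|S_b|$ coordinates available, i.e. $n_b \ge |S_b|$. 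The number of multisets $S_b$ of positive integers of total size at most $d$ is bounded by a constant $c(d)$ (e.g. it is at most the number of partitions of integers $\le d$, which is $\sum_{j=0}^d p(j)$), and there are only $k = O(1)$ blocks, so the number of orbits is at most $c(d)^k$, a constant depending only on $d$ (and $k$).

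For the set $Y$ the argument is identical but the per-coordinate data is now a \emph{pair} $(x_i, y_i) \in \mathbb{N}^2$ with $x_i \le d$ and $y_i \le d$; a $G$-orbit on $Y$ is determined by, for each block $b$, the multiset of such pairs appearing in that block. Again the ``support'' is small: a coordinate contributes a nonzero pair only if $x_i > 0$ or $y_i > 0$, and $\sum_i x_i \le d$, $\sum_i y_i \le d$ force at most $2d$ such coordinates in total, each carrying a pair from the finite set $\{0,\dots,d\}^2$. So an orbit is encoded by choosing, for each of the $k$ blocks, a multiset of at most $2d$ pairs from $\{0,\dots,d\}^2$ (with the feasibility constraint $n_b \ge |S_b|$), and the count is again bounded by a constant $c'(d,k)$ depending only on $d$ and $k$.

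The final claim — that for $n \ge 2d$ the number of orbits is \emph{constant} in $n$, not merely bounded — follows by noting that the feasibility constraints $n_b \ge |S_b|$ become automatically satisfiable once $n$ is large: since $|S_b| \le d$ for the $W$ case (resp. $|S_b| \le 2d$ for $Y$) and $\sum_b n_b = n$, as soon as $n \ge 2d$ (resp. with the appropriate bookkeeping) every combinatorially admissible tuple of multisets $(S_1,\dots,S_k)$ whose sizes sum to at most $2d$ can be realized by \emph{some} choice of block sizes. More carefully, one should argue that the set of realizable orbit-encodings stabilizes: I would show that a tuple $(S_1, \dots, S_k)$ of multisets with $\sum_b |S_b| \le 2d$ is realizable precisely when $n_b \ge |S_b|$ for all $b$, and that for $n \ge 2d$ this is implied by $\sum_b |S_b| \le n$ together with the pigeonhole distribution of support across blocks; hence the set of orbits is in bijection with a fixed finite combinatorial set independent of $n$, for all $n \ge 2d$.

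The main obstacle is the last point: the bound ``$\le c(d)^k$'' is easy, but the \emph{exact stabilization} for $n \ge 2d$ requires care about the interaction between the block sizes $n_1, \dots, n_k$ and where the bounded support can be placed. Concretely, one must verify that for $n \ge 2d$, no admissible orbit-profile is excluded due to a block being too small to hold its share of the support — which is where the hypothesis $n \ge 2d$ (total support size $\le 2d$) is used — and, conversely, that no new profiles appear. I would handle this by giving an explicit injection from orbits into the fixed finite set of ``abstract profiles'' (tuples of multisets of bounded weight, one per block) and checking surjectivity onto the subset of abstract profiles that are block-size-feasible, then observing this subset is the same for all $n \ge 2d$. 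The rest is routine counting.
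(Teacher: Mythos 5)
Your approach mirrors the paper's: encode a $G$-orbit on $Y$ (or $W$) by the $k$-tuple of per-block multisets, note that the total support is at most $2d$ so each per-block multiset lives in a fixed finite set, and bound the orbit count by a constant depending only on $d$ and $k$. That part is correct and is essentially what the paper does; the paper's two-stage count (first distribute target sums $(h_j,\ell_j)$ across blocks, then count multisets per block with those sums) is just a different bookkeeping of the same bound.

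The gap is in your stabilization claim, and it is a real one. You assert that for $n\ge 2d$, the feasibility conditions $n_b\ge |S_b|$ are ``implied by $\sum_b |S_b|\le n$ together with the pigeonhole distribution of support across blocks.'' That inference is false: the block sizes $n_1,\dots,n_k$ are \emph{fixed data}, and a multiset $S_b$ is pinned to block $b$, so pigeonhole cannot move support from an overfull block to a roomy one. Concretely, take $k=2$, $d=2$, $n_1=1$, $n_2=n-1$. For any $n\ge 4=2d$, the profile $S_1=\{(1,0),(1,0)\}$, $S_2=\emptyset$ satisfies $\sum_b|S_b|=2\le 2d$ but is never realizable since $|S_1|=2>1=n_1$. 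So the set of realizable profiles genuinely depends on the individual $n_b$'s, not merely on $n$, and your claimed bijection onto ``the fixed finite set of abstract profiles'' does not hold for arbitrary block partitions. (The exact-stabilization claim holds cleanly when $k=1$, or under the stronger hypothesis that every $n_b\ge 2d$; the paper's own proof is only careful about exactness in the $S_n$ case, and in the general case it explicitly states only an upper bound.) For the downstream use of this lemma in \cref{th:invariant-matrix-finite-decomposition} and the two main theorems, it is the $n$-independent \emph{upper bound} that is needed, and that part of your argument is sound; but you should either drop the exact-stabilization claim or restrict it to the case $n_b\ge 2d$ for every $b$.
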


\begin{proof}
    We will prove the statement for the set $Y$ as the argument for $W$ follows the same ideas and strategy and is strictly simpler.
    First, we consider the case of the full symmetric group, i.e., assume $G = S_n$. Let $(x,y) = \left( (x_1, x_2, \ldots, x_n),(y_1, y_2, \ldots, y_n) \right)$ in $Y$ and consider its \emph{multiset of pairs}, that is, $\{(x_1,y_1), (x_2,y_2), \ldots, (x_n, y_n) \}$. 
    Further, let $g \in G$ and observe that the action of $g$ on $(x,y)$ only permutes the pairs of components $(x_i,y_i)$. We can conclude that two elements $(x,y),(x',y') \in Y$ are in the same orbit if and only if they have the same multiset of pairs $\{(x_1,y_1), (x_2,y_2), \ldots, (x_n, y_n) \} = \{(x'_1,y'_1), (x'_2,y'_2), \ldots, (x'_n, y'_n) \}$.

    The problem thus is reduced to counting the number of distinct multisets of size $n$ of pairs $(a,b) \in \mathbb{N}^2$, denoted by $M = \{(a_1, b_1), \ldots, (a_n, b_n)\}$, such that the component-wise sum is at most $d$. That is, $\sum_{i=1}^n a_i \leq d$ and $\sum_{i=1}^n b_i \leq d$.

    The number of such multisets is related to the number of multisets of pairs whose component sums are fixed. Let $p_2(k,\ell)$ be the \emph{number of partitions of the bi-integer $(k, \ell)$}, that is, the number of multisets of pairs $(a_j, b_j) \in \mathbb{N}^2$ such that $\sum_j a_j = k$ and $\sum_j b_j = \ell$. It follows that the total number of possible multisets whose component sums are at most $d$, without restriction on the size of the multiset, is given by
    \begin{equation}\label{eq:symmetric_group_on_pairs_upper_bound_by_partitions}
        \sum_{k=0}^d \sum_{\ell = 0}^d p_2(k,\ell).
    \end{equation}
    For $n$ large enough, any multiset counted by this sum can be augmented with $(0,0)$ pairs to form a multiset of size $n$ that satisfies the sum constraints. Therefore, the number of multisets of size $n$ is bounded by this sum. 
    We emphasize that \cref{eq:symmetric_group_on_pairs_upper_bound_by_partitions} gives the exact number of distinct orbits for $n \geq 2d$. When $n < 2d$, this equation provides an upper bound. To maintain clarity, we will concentrate on the former case ($n \geq 2d$).

    Furthermore, if $n \geq 2d$, then the upper bound in \cref{eq:symmetric_group_on_pairs_upper_bound_by_partitions} does not depend on $n$. Indeed, first observe that the number of elements needed for a partition of $(h, \ell)$ is at most $h + \ell \leq 2d$. Moreover, we can assume to have a partition $\left( (h_1,\ell_1), (h_2, \ell_2), \ldots, (h_{2d}, \ell_{2d}) \right)$ such that $\sum_{i=1}^{2d} h_i = h$ and $\sum_{i=1}^{2d} \ell_i = \ell$, where eventually we allow for the zero-pairs $(0,0)$. Finally, observe that $(h_1, \ldots, h_{2d})$ and $(\ell_1, \ldots, \ell_{2d})$ are \emph{compositions} in $2d$ nonnegative integers of $h$ and $\ell$, respectively. By a stars-and-bars argument, the number of compositions of $h$ (or $\ell$) into $2d$ nonnegative integers is $\binom{h + 2d - 1}{h}$ (or $\binom{\ell + 2d - 1}{\ell})$, thus
    \begin{equation*}
        p_2(h,\ell) \leq \binom{h + 2d - 1}{h} \binom{\ell + 2d - 1}{\ell}.
    \end{equation*}
    This upper bound depends only on $d$, and not on $n$, thus concluding our proof for the case $G = S_n$.

    Now we turn to the general case, where $G = S_{n_1} \times S_{n_2} \times \ldots \times S_{n_k}$ for some fixed positive integer $k$ and $\sum_{i=1}^k n_i = n$. The group $G$ permutes indices only within each block $\{1, \ldots, n_1\}$, $\{n_1+1, \ldots, n_1+n_2\}$, ..., $\{\sum_{i=1}^{k-1} n_i + 1, \ldots, n\}$.
    Two elements $(x,y),(x',y') \in Y$ are in the same orbit under the action of $G$ if and only if, for each block $j \in \{1, \ldots, k\}$, the multiset of pairs $\{(x_i, y_i) \mid i \text{ is in block } j\}$ is equal to the multiset of pairs $\{(x'_i, y'_i) \mid i \text{ is in block } j\}$.
    Let $M_j = \{(x_i, y_i) \mid i \text{ is in block } j\}$ be the multiset of pairs for block $j$. The size of $M_j$ is $n_j$. An orbit is uniquely determined by the $k$-tuple of multisets $(M_1, M_2, \ldots, M_k)$.

    To count the $k$-tuples of multisets $(M_1, M_2, \ldots, M_k)$ we proceed as follows. Let $0 \leq h \leq d$ and $0 \leq \ell \leq d$ be the sums of the first and second components, respectively, aggregated over all $k$ blocks. Then we consider all ways to distribute these total sums into block-specific target sums. Specifically, we consider the two ordered dispositions $(h_1,h_2 \ldots, h_k)$ and $(\ell_1, \ell_2, \ldots, \ell_k)$ of $h$ and $\ell$, respectively, in $k$ parts. Moreover, we observe that there are $\binom{h + k - 1}{k}$ and $\binom{\ell + k - 1}{k}$ dispositions, respectively. Since $d$ and $k$ are fixed, these binomial coefficients are bounded by constants depending only on $d$ and $k$.

    Next, for each block $j \in \{ 1, \ldots, k \}$, we need to determine the number of distinct multisets $M_j$ (of size $n_j$) such that the sum of its first components is exactly $h_j$ and the sum of its second components is exactly $\ell_j$. By reasoning as in the $S_n$ case, we obtain that the number of multisets whose sums are bounded by $(h_j, \ell_j)$ is bounded by $C_d = \sum_{a=0}^d \sum_{b=0}^d p_2(a,b)$, a constant depending only on $d$. Therefore, for a given set of target sums $(h_1, \ell_1), \ldots,( h_k, \ell_k)$, the number of ways to choose the $k$-tuple of multisets $( M_1, \ldots, M_k)$ is at most $(C_d)^k$.
    
    The total number of orbits is then bounded by summing over all the pairs $(h,\ell) \in [d]^2$ and then over all the possible ways to dispose $h$ and $\ell$ in $k$ parts.
    This is a constant that depends only on $d$ and $k$, and since $d$ is fixed and $k = O(1)$, this constant is also independent from $n$.
\end{proof}

\cref{th:orbits_pairs_indices_under_symmetric_action} implies that, up to symmetry, the number of distinct entries in the Gram matrix of a $G$-invariant $\sos$ polynomial is constant. This, in turn, bounds the number of variables needed to formulate the semidefinite programs in \cref{th:O(1)_symmetric_constraint_sos-proofs} and \cref{th:invariant_systems_refutation}.
On the other hand, the next \cref{th:invariant-matrix-finite-decomposition}, essentially shows that, under similar symmetry assumptions, the space of $G$-invariant matrices has constant dimension.

\begin{proposition}\label{th:invariant-matrix-finite-decomposition}
    Let $k,d \in O(1)$ be fixed positive integers. Let $G = S_{n_1} \times S_{n_2} \times \ldots \times S_{n_k}$ be a group of block permutations of indices $1, \ldots, n$.
     \begin{enumerate}
         \item If $Q\in \MS^{\omega_n^d}$ is a symmetric $G$-invariant matrix, then $Q$ can be written as a linear combination $Q = \sum_{i=1}^\ell c_i Q_i$,
        where $c_i \in \mathbb{R}$, each $Q_i$ is a symmetric matrix whose entries have value only $0$ or $1$, and $\ell$ is bounded above and independent from $n$.
        \item  If $p\in \mathbb{R}[x_1, \dots, x_n]_{d}$ is a polynomial that is invariant under the action fo $G$, then $p$ can be written as $p=\sum_{i=1}^{\ell}c_ip_i$,
         where $c_i \in \mathbb{R}$, each $p_i$ is a polynomial of degree at most $d$, whose coefficients are either $0$ or $1$, and $\ell$  is bounded above and independent from $n$.
     \end{enumerate}
\end{proposition}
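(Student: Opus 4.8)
The plan is to handle both parts through a single mechanism: a $G$-invariant object is constant on the $G$-orbits of its index set, and in each case that index set is precisely one of the sets $W$ or $Y$ whose orbit count is controlled by \cref{th:orbits_pairs_indices_under_symmetric_action}. The ``$0/1$ building blocks'' will then be exactly the indicator polynomials/matrices of the individual orbits, and the bound on $\ell$ will be the orbit bound from that lemma. (Note that the action used in \cref{th:orbits_pairs_indices_under_symmetric_action}, sending $x$ to $(x_{g(1)},\dots,x_{g(n)})$, and the action of \cref{def:action-on-poly}, using $g^{-1}$, differ only by $g\mapsto g^{-1}$ and hence have the same orbits, so the counts there apply verbatim.)

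\textbf{Part (2): the polynomial case.} I would start here as it is the cleaner of the two. Writing $p=\sum_{|\alpha|\le d}a_\alpha x^\alpha$ and using \cref{def:action-on-poly}, $G$-invariance of $p$ is equivalent to $a_\alpha=a_{g\cdot\alpha}$ for all $g\in G$; that is, $\alpha\mapsto a_\alpha$ is constant on each $G$-orbit of $W=\{\alpha\in\mathbb N^n:|\alpha|\le d\}$. If $\mathcal O_1,\dots,\mathcal O_\ell$ are those orbits, set $p_i:=\sum_{\alpha\in\mathcal O_i}x^\alpha$ (degree $\le d$, coefficients in $\{0,1\}$) and let $c_i$ be the common value of $a_\alpha$ on $\mathcal O_i$; then $p=\sum_{i=1}^\ell c_ip_i$, and $\ell$ is the number of $G$-orbits on $W$, bounded by a constant depending only on $d$ and $k$ by \cref{th:orbits_pairs_indices_under_symmetric_action}.

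\textbf{Part (1): the matrix case.} The rows and columns of $Q\in\MS^{\omega_n^d}$ are indexed by the monomials of $\x_d$, hence by exponent vectors of degree $\le d$, so the entries of $Q$ are indexed by pairs $(\alpha,\beta)$ ranging over the set $Y$ of \cref{th:orbits_pairs_indices_under_symmetric_action}. By \cref{def:action-on-matrices}, $G$-invariance says $Q_{x^\alpha,x^\beta}=Q_{x^{g\cdot\alpha},x^{g\cdot\beta}}$ for every $g\in G$, i.e.\ $Q$ is constant on each orbit of the diagonal $G$-action on $Y$. Letting $A_i$ be the $0/1$ indicator matrix of the $i$-th such orbit, one gets $Q=\sum_{i=1}^m c_iA_i$ with $c_i$ the common value of $Q$ on that orbit, and $m$ is bounded independently of $n$ by the lemma. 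The one point needing care is that the $A_i$ need not be symmetric: to repair this I would use that the coordinate swap $\tau:(\alpha,\beta)\mapsto(\beta,\alpha)$ commutes with the diagonal $G$-action, hence permutes the orbits by an involution $\sigma$ with $A_i^\top=A_{\sigma(i)}$; since $Q=Q^\top$ and the $A_i$ have pairwise disjoint supports (so are linearly independent), this forces $c_i=c_{\sigma(i)}$. Grouping each orbit pair $\{i,\sigma(i)\}$ with $\sigma(i)\ne i$ into the single symmetric $0/1$ matrix $A_i+A_{\sigma(i)}$, and keeping the already-symmetric $A_i$ with $\sigma(i)=i$ as they are, rewrites $Q$ as a real combination of at most $m$ symmetric $0/1$ matrices, with $m$ the $n$-independent orbit bound.

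\textbf{Main obstacle.} The only genuinely non-routine step is this reconciliation of $G$-invariance with the transpose-symmetry of $Q$ in Part (1): the naive orbit-indicator construction produces matrices that are $G$-invariant but not symmetric, and one must argue that pairing orbits under the coordinate swap --- together with the equality of coefficients that symmetry forces --- yields symmetric $0/1$ blocks without increasing the count. Part (2) has no such issue; in fact it could also be deduced from Part (1) via a Gram representation, but the direct orbit argument keeps the degree bound at $d$ and is simpler to state.
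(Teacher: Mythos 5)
Your proof takes essentially the same route as the paper: both parts reduce to counting the $G$-orbits on the index sets $W$ and $Y$ from \cref{th:orbits_pairs_indices_under_symmetric_action}, and both use orbit-indicator objects as the $0/1$ building blocks. The one place you diverge is in fact a genuine improvement. For Part (1), the paper defines $Q_i$ to be the indicator of a single $G$-orbit of $Y$ and then simply asserts that ``the symmetry of each $Q_i$ follows from the symmetry of $Q$ and the orbit structure.'' As written this is not quite right: the indicator matrix of a single $G$-orbit of pairs $(\alpha,\beta)$ need not be symmetric, since the diagonal $G$-action does not in general send $(\alpha,\beta)$ to $(\beta,\alpha)$ (e.g.\ when $|\alpha|\neq|\beta|$). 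Your repair --- observing that the swap $\tau:(\alpha,\beta)\mapsto(\beta,\alpha)$ commutes with the diagonal action and hence induces an involution $\sigma$ on the orbits with $A_i^\top=A_{\sigma(i)}$, that $Q=Q^\top$ together with the disjoint supports forces $c_i=c_{\sigma(i)}$, and then grouping $A_i+A_{\sigma(i)}$ into a single symmetric $0/1$ block --- is exactly the argument needed to make the paper's claim correct, and it does not increase the count beyond the orbit bound. For Part (2), the paper defers to a reference; your direct argument via orbit indicators of $W$ is clean, self-contained, and keeps the degree at $d$. Your remark about the $g$ vs.\ $g^{-1}$ convention mismatch between \cref{def:action-on-poly} and \cref{th:orbits_pairs_indices_under_symmetric_action} is also correct: since $g\mapsto g^{-1}$ is a bijection of $G$, the two actions have the same orbits.
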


\begin{proof}
    (1) Let $Q \in \MS^{\omega_n^d}$ be a symmetric $G$-invariant matrix. The invariance property $g \star Q = Q$ for all $g \in G$ is equivalent to stating that the entries of $Q$ satisfy $Q_{x^{\alpha},x^{\beta}} = Q_{g \cdot x^{\alpha}, g \cdot x^{\beta}}$ for all $g \in G$ and for all multi-indices $\alpha, \beta$ such that $\sum \alpha_i \leq d$ and $\sum \beta_i \leq d$.

    This condition implies that the value of an entry $Q_{x^{\alpha},x^{\beta}}$ is uniquely determined on the $G$-orbits of the set of pairs of multi-indices $Y = \{(\alpha, \beta) \in \mathbb{N}^n \times \mathbb{N}^n \ | \ \sum \alpha_i \leq d, \sum \beta_i \leq d \}$. That is, $Q_{x^{\alpha}, x^{\beta}} = Q_{x^{\gamma}, x^{\delta}}$ if and only if $(\alpha,\beta)$ and $(\gamma, \delta)$ belong to the same orbit of $Y$ under the action of $G$.
    
    Let $\mathcal{O}_1, \ldots, \mathcal{O}_\ell$ be the distinct orbits of $Y$ under the action of $G$. These orbits form a partition of $Y$. Let $(\gamma_i, \delta_i)$ be a chosen representative from each orbit $\mathcal{O}_i$.
    
    We define a set of matrices $Q_i \in \MS^{\omega_n^d}$ for $i=1, \ldots, \ell$. Each $Q_i$ is a 0/1 matrix for the orbit $\mathcal{O}_i$:
    \begin{align*}
        (Q_i)_{x^{\alpha},x^{\beta}} = \begin{cases}
            1 \quad \text{if } (\alpha, \beta) \in \mathcal{O}_i \\
            0 \quad \text{otherwise}
            \end{cases}
    \end{align*}
    Since the orbits $\mathcal{O}_i$ partition $Y$, for any given pair $(\alpha, \beta) \in Y$, there is exactly one orbit $\mathcal{O}_j$ such that $(\alpha, \beta) \in \mathcal{O}_j$. This means that for any $(\alpha, \beta)$, exactly one matrix $Q_j$ will have a $1$ at position $(x^\alpha, x^\beta)$, and all other $Q_i$ ($i \neq j$) will have $0$ at that position.
    
    Next, we define the coefficients $c_i$. Since $Q$ is invariant on each orbit $\mathcal{O}_i$, we can define $c_i$ as the value of $Q$ for any pair of indices in the orbit $\mathcal{O}_i$. Using the chosen representative $(\gamma_i, \delta_i)$, we set:
    \begin{align*}
        c_i = Q_{x^{\gamma_i}, x^{\delta_i}}
    \end{align*}
    Since $Q$ is constant on $\mathcal{O}_i$, for any $(\alpha, \beta) \in \mathcal{O}_i$, we have $Q_{x^{\alpha}, x^{\beta}} = c_i$.

    To show $Q = \sum_{i=1}^\ell c_i Q_i$, consider an arbitrary entry $Q_{x^{\alpha}, x^{\beta}}$. If $(\alpha, \beta) \in \mathcal{O}_j$, then $(Q_j)_{x^{\alpha}, x^{\beta}} = 1$ and $(Q_i)_{x^{\alpha}, x^{\beta}} = 0$ for $i \neq j$. Thus, $(\sum_{i=1}^\ell c_i Q_i)_{x^{\alpha}, x^{\beta}} = c_j = Q_{x^{\alpha}, x^{\beta}}$. The symmetry of each $Q_i$ follows from the symmetry of $Q$ and the orbit structure. The number of orbits $\ell$ is bounded by a constant independent from $n$, as stated in \cref{th:orbits_pairs_indices_under_symmetric_action}.

    (2) The proof uses techniques similar to those in (1). This result is commonly known as the Fundamental Theorem of Symmetric Polynomials; see \cite{Egge19} for a comparable argument and further references.
\end{proof}

\section{Automatability of $\sos$ proofs under symmetry conditions}\label{sec:symmetric-case}

In this section, we present our main results. Specifically, we establish symmetry-based conditions that ensure that property (\textsc{P}) holds. As discussed in the introduction, this implies that--under the mild assumption of Archimedeanity--finding bounded-degree $\sos$ proofs can be automated via the ellipsoid method.

We begin by outlining our approach. We consider $\mathcal{P} = \{ p_1 = 0, \ldots, p_m = 0 \}$ and a degree-$2d$ polynomial $r$. We first observe that $(r,\mathcal{P},\emptyset)$ satisfies property (P) if and only if, given that the  following system is feasible for some $Q\in \MS^n$ and $h_i\in \mathbb{R}[x_1, \dots, x_n]$ 
\begin{align}\label{proof-1}
    r=\langle Q, \x_d \x_d^\top \rangle + \sum_{i=1}^mh_ip_i , \quad \quad Q\succeq 0, \quad h_i\in \mathbb{R}[x_1, \dots,x_n]_{2d-\deg(p_i)}, 
\end{align}
then there exists a solution with $\tilde{Q} \succeq 0$ and $\tilde{h_i} \in \mathbb{R}[x_1, \ldots, x_n]$ with entries and coefficients bounded by $2^{\poly(n^d)}$. This follows from the following lemma.

\begin{lemma}\label{th:bound-coeff}
    Let $Q$ be a positive semidefinite matrix with entries bounded by $2^{\poly(n^d)}$ and let $r=\langle Q, \x_d \x_d^\top \rangle$. If there exist polynomials $s_1,\dots,s_k$ such that $r=\sum_{i=1}^k s_i^2$, then $s_i$ has coefficients bounded by $2^{\poly(n^d)}$ for every $i$.
\end{lemma}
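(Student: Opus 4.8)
The plan is to dominate each square summand by $r$ pointwise and then convert this analytic inequality into a coefficient bound by integrating against a fixed, well‑conditioned measure. First note that each $s_i$ has degree at most $d$: the top‑degree homogeneous part of $\sum_i s_i^2$ equals $\sum_i (s_i^{\mathrm{top}})^2$, which vanishes only if every $s_i^{\mathrm{top}}$ does, so $2\max_i\deg(s_i)=\deg(r)\le 2d$. Since $r=\sum_{i=1}^k s_i^2$ and every summand is nonnegative, $s_i(x)^2\le r(x)$ for all $x\in\mathbb{R}^n$, in particular on $[-1,1]^n$; integrating,
\begin{equation*}
    \int_{[-1,1]^n} s_i(x)^2\,dx \;\le\; \int_{[-1,1]^n} r(x)\,dx \;=:\; R .
\end{equation*}
The number $R$ is controlled by the coefficients of $r$: since $r=\langle Q,\x_d\x_d^\top\rangle$ with $Q$ of size $\omega_n^d$ and entries bounded by $2^{\poly(n^d)}$, every coefficient of $r$ has magnitude at most $2^{\poly(n^d)}$, and each monomial of degree at most $2d$ integrates to at most $2^{O(d)}$ over the box (only $\le 2d$ of the exponents are nonzero), so $R\le \binom{n+2d}{2d}\cdot 2^{\poly(n^d)}\le 2^{\poly(n^d)}$.

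To extract a bound on the coefficient vector $v_i\in\mathbb{R}^{\omega_n^d}$ of $s_i=v_i^\top\x_d$, observe $\int_{[-1,1]^n} s_i^2\,dx = v_i^\top M v_i$, where $M:=\int_{[-1,1]^n}\x_d\x_d^\top\,dx$ is the moment matrix of the box. Since the monomials of degree at most $d$ are linearly independent as functions, $M\succ 0$, hence $v_i^\top M v_i\ge \lambda_{\min}(M)\,\|v_i\|_2^2$, and therefore $\|v_i\|_2^2\le R/\lambda_{\min}(M)$. It then remains to lower bound $\lambda_{\min}(M)$ by $2^{-\poly(n^d)}$, and I would do this by tensorization: $M$ is a principal submatrix of $M_1^{\otimes n}$, where $M_1=\big(\int_{-1}^1 t^{j+k}\,dt\big)_{0\le j,k\le d}$ is the univariate moment matrix on $[-1,1]$ (the entry of $M_1^{\otimes n}$ at $(\gamma,\delta)$ is $\prod_i\int_{-1}^1 t^{\gamma_i+\delta_i}dt=\int_{[-1,1]^n}x^{\gamma+\delta}dx$, and $\{\gamma:|\gamma|\le d\}\subseteq\{\gamma:\gamma_i\le d\ \forall i\}$). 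By Cauchy interlacing and the tensor eigenvalue formula, $\lambda_{\min}(M)\ge\lambda_{\min}(M_1^{\otimes n})=\lambda_{\min}(M_1)^n$. As $M_1\succ 0$ depends only on $d$, $\lambda_{\min}(M_1)$ is a positive constant when $d=O(1)$ (and at worst $2^{-O(d)}$ for growing $d$, by classical estimates on the coefficients of Legendre polynomials), so $\lambda_{\min}(M)\ge 2^{-O(n)}\ge 2^{-\poly(n^d)}$.

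Combining everything, $\|v_i\|_2^2\le 2^{\poly(n^d)}$ for every $i$, and since each coefficient of $s_i$ is an entry of $v_i$, it has absolute value at most $\|v_i\|_2\le 2^{\poly(n^d)}$, which is the claim. The only step requiring genuine care is the inverse‑exponential lower bound on $\lambda_{\min}(M)$: mere linear independence of monomials yields $\lambda_{\min}(M)>0$ but nothing quantitative, which is exactly why one reduces to the small, $n$‑independent univariate matrix $M_1$ via interlacing. (Any sufficiently nondegenerate product measure — e.g. the standard Gaussian, or a product of discrete measures — would serve equally well in place of Lebesgue measure on $[-1,1]^n$; the box is chosen only to make the eigenvalue estimate explicit.)
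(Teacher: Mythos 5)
Your proof is correct, but it takes a genuinely different route from the paper's. The paper works in the sup-norm on $[-1,1]^n$: it bounds $\max_{[-1,1]^n}|r|\le 2^{\poly(n^d)}$ from the coefficient bound, dominates $\max|s_i|$ by $\sqrt{\max|r|}$, and then invokes the coefficient-norm-versus-supremum-norm estimate of \cref{lemma-corbi} (from \cite{KORDA20171}) to convert the sup bound on $s_i$ into a coefficient bound, with a $3^{d+1}$ loss. You instead work in $L^2$: integrate the pointwise domination $s_i^2\le r$ over the box, write $\int s_i^2 = v_i^\top M v_i$ against the moment matrix $M$, and lower-bound $\lambda_{\min}(M)$ by $\lambda_{\min}(M_1)^n$ via interlacing inside the tensor power of the univariate moment matrix $M_1$. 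Your argument is more self-contained (it uses only linear-algebra facts rather than citing an external coefficient-norm estimate), at the cost of being longer; both give the same $2^{\poly(n^d)}$ bound. One small arithmetic slip worth correcting: your parenthetical claim that a monomial of degree $\le 2d$ integrates to at most $2^{O(d)}$ over the box because ``only $\le 2d$ of the exponents are nonzero'' is backwards --- the zero exponents each contribute a factor of exactly $2$, so $\int_{[-1,1]^n}x^\gamma\,dx=\prod_i\int_{-1}^1 t^{\gamma_i}\,dt\le 2^n$, not $2^{O(d)}$. This does not damage the conclusion, since $2^n\le 2^{\poly(n^d)}$ and $R$ is still bounded as required; the intermediate justification should simply read $2^n$ rather than $2^{O(d)}$.
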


We first recall the following. Consider a multivariate polynomial $r=\sum_{|\alpha|\leq d} c_\alpha x^\alpha$ of degree $d\in \mathbb N$. The \emph{coefficient norm} of $r$ is defined as $\|r\|_{\mathbb{R}[x]} = \max_{\alpha} \frac{|c_\alpha|}{\binom{|\alpha|}{\alpha}}$, where $\binom{|\alpha|}{\alpha} = \frac{|\alpha|!}{\alpha_1!\cdot\alpha_2!\cdot\dots \cdot\alpha_n!}$.
The coefficient norm of a polynomial can be bounded in terms of its supremum norm on $[-1, 1]^n$ as follows.
\begin{lemma}[\cite{KORDA20171}] \label{lemma-corbi}
    Let $r\in \mathbb{R}[x]_d$, then 
    \begin{equation*}
        \|r\|_{\mathbb{R}[x]} \leq 3^{d+1}\max_{x\in [-1,1]^n}|r(x)|.
    \end{equation*}
\end{lemma}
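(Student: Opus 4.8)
This inequality is classical in approximation theory and is quoted here from \cite{KORDA20171}; the plan for a self-contained argument is as follows. After rescaling we may assume $M:=\max_{x\in[-1,1]^n}|r(x)|=1$, and we must show $|c_\alpha|\le 3^{d+1}\binom{|\alpha|}{\alpha}$ for every multi-index $\alpha$ with $|\alpha|\le d$. First I would reduce to the homogeneous case: writing $r=\sum_{j=0}^{d}r_j$ for the decomposition into homogeneous parts, fix $x\in[-1,1]^n$ and consider the univariate polynomial $s\mapsto r(sx)=\sum_{j=0}^{d}r_j(x)\,s^j$, which has degree at most $d$ and satisfies $|r(sx)|\le 1$ for all $s\in[-1,1]$ (since $sx\in[-1,1]^n$). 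Because any single monomial coefficient of a univariate polynomial of degree at most $d$ that is bounded by $1$ on $[-1,1]$ is at most a dimension-free constant $B_d=2^{O(d)}$ (for instance via the Chebyshev expansion $q=\sum_m\hat q_mT_m$ with $|\hat q_m|\le 2$ together with the bound $2^{m-1}$ on the coefficients of $T_m$), we get $|r_j(x)|\le B_d$ for every $x\in[-1,1]^n$ and every $j$, so it suffices to establish the claim for each homogeneous $r_j$ (the case $j=0$ being immediate).

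For the homogeneous case, fix $r_j=\sum_{|\alpha|=j}c_\alpha x^\alpha$ with $j\ge 1$ and a multi-index $\alpha$ with $|\alpha|=j$. Let $F$ be the unique symmetric $j$-linear form on $(\mathbb{R}^n)^j$ with $F(x,\dots,x)=r_j(x)$; collecting monomials in $F(x,\dots,x)=\sum_{t_1,\dots,t_j}F(e_{t_1},\dots,e_{t_j})x_{t_1}\cdots x_{t_j}$ shows that $c_\alpha/\binom{j}{\alpha}=F(e_{i_1},\dots,e_{i_j})$ whenever the multiset $\{i_1,\dots,i_j\}$ has multiplicities $\alpha$. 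I would then apply the polarization identity $F(v_1,\dots,v_j)=\tfrac{1}{j!}\sum_{T\subseteq[j]}(-1)^{j-|T|}r_j\big(\sum_{t\in T}v_t\big)$ to the rescaled vectors $v_t:=\lambda_te_{i_t}$ with $\lambda_t:=1/\alpha_{i_t}$. The scalars are chosen precisely so that every partial sum $\sum_{t\in T}\lambda_te_{i_t}$ has all coordinates in $[0,1]\subseteq[-1,1]$, hence $|r_j(\sum_{t\in T}\lambda_te_{i_t})|\le\max_{[-1,1]^n}|r_j|$. Using multilinearity of $F$ and $\prod_t\lambda_t^{-1}=\prod_{\ell}\alpha_\ell^{\alpha_\ell}$, the polarization formula gives
\begin{equation*}
    \Big|\tfrac{c_\alpha}{\binom{j}{\alpha}}\Big|=\big|F(e_{i_1},\dots,e_{i_j})\big|\le\Big(\prod_{\ell}\alpha_\ell^{\alpha_\ell}\Big)\frac{2^{j}}{j!}\,\max_{[-1,1]^n}|r_j|.
\end{equation*}
Combining with the previous step and the Stirling bound $\alpha_\ell^{\alpha_\ell}\le e^{\alpha_\ell}\alpha_\ell!$, which yields $\big(\prod_\ell\alpha_\ell^{\alpha_\ell}\big)/j!\le e^{j}\prod_\ell\alpha_\ell!/j!=e^{j}/\binom{j}{\alpha}\le e^{j}$, we obtain $|c_\alpha/\binom{j}{\alpha}|\le(2e)^{j}B_d$.

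This already proves a bound of the shape $\|r\|_{\mathbb{R}[x]}\le C^{d}\max_{x\in[-1,1]^n}|r(x)|$ for an absolute constant $C$, which is exactly the qualitative statement needed for \cref{th:bound-coeff} (since then $C^{d}\le 2^{\poly(n^d)}$). I expect the only genuinely delicate point to be sharpening the base of the exponential to the stated $3$: this requires replacing the crude $2^{O(d)}$ univariate coefficient bound by its extremal value and, in the polarization step, polarizing only in the distinct directions occurring in $\alpha$ (equivalently, extracting $F(e_{i_1},\dots,e_{i_j})$ through the associated partial-derivative form) so as not to pay the full factor $2^{j}$. This optimization is carried out in \cite{KORDA20171}, which I would cite for the precise constant; for our purposes the weaker $C^{d}$ bound above is already sufficient.
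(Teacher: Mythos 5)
The paper states this lemma as a citation to \cite{KORDA20171} and gives no proof of its own, so there is no internal argument to compare against; you are supplying a self-contained sketch where the authors chose to defer to the literature. Your sketch is sound in outline and, as you observe, yields a bound of the form $C^{d}\max_{[-1,1]^n}|r|$ for an absolute constant $C$ rather than the stated $3^{d+1}$; since the lemma is invoked only inside \cref{th:bound-coeff}, where any $2^{O(d)}$ prefactor is absorbed into $2^{\poly(n^d)}$, the weaker bound is indeed adequate for the paper's purposes, and you flag the sharp constant as the part you would simply cite.

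Two small corrections to the univariate step. First, the statement that the coefficients of $T_m$ are bounded by $2^{m-1}$ is only true of the leading coefficient; already $T_5 = 16x^5 - 20x^3 + 5x$ has a coefficient of magnitude $20 > 2^4$. The correct qualitative bound, obtainable from the recursion $T_{m+1} = 2xT_m - T_{m-1}$, is that the coefficients of $T_m$ are bounded by $(1+\sqrt 2)^m$, which still gives $B_d = 2^{O(d)}$ and so does not affect your conclusion, but the intermediate claim as written is false. Second, when reducing to the homogeneous pieces $r_j$, you bound $\max_{[-1,1]^n}|r_j| \le B_d$ and then use $\max_{[-1,1]^n}|r_j|$ again in the polarization step; it is worth stating explicitly that the final bound is therefore $(2e)^j B_d \cdot M$ after restoring the normalization $M = \max|r|$, so that the dependence on $\max_{[-1,1]^n}|r|$ (not $\max|r_j|$) is what survives. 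The polarization computation itself, including the choice $\lambda_t = 1/\alpha_{i_t}$ so that all partial sums land in $[0,1]^n$, and the Stirling estimate $\prod_\ell \alpha_\ell^{\alpha_\ell}/j! \le e^j/\binom{j}{\alpha}$, are correct.
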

We can now prove \cref{th:bound-coeff}.

\begin{proof}[Proof of \cref{th:bound-coeff}]
    Let $s_1,\dots,s_k\in \mathbb R[x_1,\dots,x_n]$ be polynomials such that $r=\sum_{i=1}^k s_i^2$. Since the entries of $Q$ are bounded by $2^{\poly(n^d)}$, it also follows that the entries of $r=\langle Q, \x_d \x_d^\top \rangle$ are bounded by $c\, 2^{\poly(n^d)}$, where $c=O(1)$. It then follows
    \begin{align*}
        c\, 2^{\poly(n^d)} \geq \max_{x\in [-1,1]^n} |r(x)| \geq \max_{x\in [-1,1]^n} |s_i(x)^2| \quad \text{for each } i\in [k].
    \end{align*}
    Observe that, for every $i\in [k]$, it is also possible to derive $c\, 2^{\poly(n^d)} \geq \max_{x\in [-1,1]^n} |s_i(x)|$. Finally, using \cref{lemma-corbi}, we can conclude that
    \begin{align*}
        c\, 2^{\poly(n^d)} \geq \max_{x\in [-1,1]^n} |s_i(x)| \geq \|s_i\|_{\mathbb R[x]} \frac{1}{3^{d+1}} \quad \text{for each } i\in [k].
    \end{align*}
    Hence, the largest coefficient of $s_i$ is upper bounded by $d! 3^{d+1} c\, 2^{\poly(n^d)}$.
\end{proof}

We present a key technical lemma that provides bounds on SDP solutions.

\begin{lemma}\label{th:porkolab-invariant-SDP+LP} 
    Let $k_1,k_2,k_3= O(1)$ be fixed positive integers. Consider a matrix $A \in \mathbb{R}^{k_1\times (k_2+k_3)}$, symmetric matrices $Q_1, \dots , Q_{k_2}\in \MS^N$, and scalar $c\in \mathbb{R}^{k_1}$. Suppose the system
    \begin{align}\label{eq:lemma-SDP+LP-formulation}
        \sum_{i=1}^{k_2} a_iQ_i \succeq 0, \qquad
        A\Big(\begin{array}{c}
             a \\
             b
        \end{array}\Big) =c, \qquad
        a\in \mathbb{R}^{k_2},\ b\in \mathbb{R}^{k_3}
    \end{align}
    has a feasible solution. Then, it has a solution $\|(a,b) \| \leq 2^{\poly(\ell)}$, where $\ell$ is the total bit size of $Q_i, A$ and $c$.
\end{lemma}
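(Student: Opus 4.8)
The plan is to reduce the mixed SDP/LP feasibility system \eqref{eq:lemma-SDP+LP-formulation} to a purely semidefinite feasibility problem of bounded size, and then invoke a known bit-size bound for feasible semidefinite programs (in the style of Porkolab--Khachiyan, see e.g. the references underlying \cite{odonnell2017,Hakoniemi-PhD}). First I would observe that the number of free scalar variables is $k_2+k_3=O(1)$, and that all the data $Q_i,A,c$ live in ambient spaces whose dimensions ($N\times N$ and $k_1\times(k_2+k_3)$) are fixed except for $N$; crucially, however, the \emph{number} of variables entering the SDP is $O(1)$, which is exactly what the known bounds need. The linear system $A\binom{a}{b}=c$ can be handled by first solving for a particular solution and a basis of the kernel over $\mathbb{Q}$ using Gaussian elimination, which by standard bounds produces rational vectors of bit size $\poly(\ell)$; substituting this parametrization eliminates the equality constraint and leaves a spectrahedral feasibility problem $\sum_j t_j M_j \succeq 0$ in $O(1)$ variables $t_j$, where each $M_j$ is an $N\times N$ symmetric matrix that is an integer (or bounded-bit-size rational) combination of the $Q_i$, hence of bit size $\poly(\ell)$.

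Next I would apply the quantitative feasibility result for semidefinite systems: a feasible spectrahedron $\{t\in\mathbb{R}^r : \sum_j t_j M_j\succeq 0\}$ with $r=O(1)$ and data of bit size $L$ contains a point of norm at most $2^{\poly(L)}$. This is the content of the bounds used by O'Donnell \cite{odonnell2017} and spelled out in Hakoniemi's thesis \cite{Hakoniemi-PhD}; it follows from the theory of the first-order theory of the reals restricted to a bounded number of variables, or alternatively from explicit estimates on semidefinite programming due to Porkolab and Khachiyan. The key point enabling its use here is precisely that $k_1,k_2,k_3$ are constants, so the ambient dimension of the optimization is constant even though the matrix size $N$ (and hence $\ell$) may grow. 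Pulling back the resulting bounded $t$ through the affine parametrization, and using that the parametrization itself has bit size $\poly(\ell)$, yields $\|(a,b)\|\le 2^{\poly(\ell)}$ as claimed.

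The main obstacle I anticipate is making the application of the semidefinite feasibility bound fully rigorous in the regime where the \emph{matrix dimension} $N$ is unbounded while the number of variables stays constant: most textbook statements phrase the bound in terms of the total encoding length and the number of variables separately, and one must check that the relevant version gives a bound of the form $2^{\poly(\ell)}$ with $\ell$ the \emph{total} bit size (which already absorbs $N$) rather than something that blows up with $N$ independently. A secondary, more routine issue is the bookkeeping for the Gaussian elimination step---ensuring the kernel basis and particular solution have polynomial bit size and that substituting them does not inflate the bit size of the matrices $M_j$ beyond $\poly(\ell)$---but this is standard linear algebra over $\mathbb{Q}$ and should not present genuine difficulty.
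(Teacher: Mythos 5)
Your proposal is correct and arrives at the same quantitative bound, but via a genuinely different reduction to pure SDP feasibility. The paper does not eliminate the equality constraint $A\binom{a}{b}=c$; instead it encodes each scalar equation $(Ay)_j - c_j = 0$ as a $2\times 2$ PSD block $\bigl(\begin{smallmatrix}0 & (Ay)_j - c_j \\ (Ay)_j - c_j & 0\end{smallmatrix}\bigr)\succeq 0$ (which is PSD iff the off-diagonal entry vanishes) and assembles all constraints into a single block-diagonal system $F_0+\sum_i y_i F_i\succeq 0$ in the original variables $y=(a,b)$, then invokes the Porkolab--Khachiyan bound directly. You instead eliminate the affine constraint by Gaussian elimination, parametrize $(a,b)$ by a particular solution plus a kernel basis (all of polynomial bit size), substitute to get a spectrahedron $M_0 + \sum_j t_j M_j\succeq 0$ in $O(1)$ new variables $t_j$, apply the same Porkolab--Khachiyan-type bound, and pull back. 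Both routes are sound. The paper's encoding trick is a bit slicker in that it avoids the bookkeeping of rational Gaussian elimination and keeps the original variables; your elimination approach is more standard LP practice and makes the ``$O(1)$ free parameters'' structure completely explicit. The obstacle you flag --- whether the feasibility bound degrades with the matrix dimension $N$ independently of total bit size --- is exactly addressed by the form of the Porkolab--Khachiyan result the paper cites, namely $\log\|\lambda\| = c\cdot n^{O(\min(\ell,n^2))}$ where $\ell$ is the number of free variables and $c$ the max coefficient bit-length; with $\ell=O(1)$ this is $c\cdot n^{O(1)}=\poly(L)$ since $N\le\sqrt{L}$ and $c\le L$, so your anticipated issue resolves cleanly. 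One small slip in your writeup: the substituted spectrahedron should carry the affine offset $M_0$ from the particular solution, i.e. $M_0+\sum_j t_j M_j\succeq 0$ rather than the homogeneous $\sum_j t_j M_j\succeq 0$; this is harmless since the cited theorem allows the constant term $Q_0$.
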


 This lemma follows from the following classical result of Porkolab and Khachiyan \cite{PorkolabKh}, which establishes upper bounds on the magnitude of semidefinite program solutions--though such bounds are, in general, exponential in the number of variables.
\begin{theorem}[\cite{PorkolabKh}]\label{th:Porkolab_SDP_complexity}
    Any feasible system of the form
    \begin{align}\label{eq:Porkolab-thm}
        Q= Q_0 +\lambda_1Q_1+\dots +\lambda_{\ell} Q_{\ell} \succeq 0
    \end{align}
    has a solution $\lambda\in \mathbb R^{\ell}$ such that $\log \Vert \lambda \Vert = c\cdot n^{O(\min(\ell,n^2))}$, where $c \in \mathbb{N}$ is the maximum bit-length of the input coefficients and $\Vert \lambda \Vert$ is the Euclidean norm in $\mathbb{R}^{\ell}$.
\end{theorem}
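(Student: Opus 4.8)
The plan is to reduce the feasibility of the linear matrix inequality to the nonemptiness of an explicit semialgebraic set of low complexity, and then to invoke a quantitative form of the principle that a nonempty semialgebraic set always contains a point whose coordinates have controlled bit size. \textbf{Step 1 (from PSD-ness to polynomial inequalities).} Write $Q(\lambda)=Q_0+\sum_{i=1}^{\ell}\lambda_i Q_i$, an $n\times n$ symmetric matrix whose entries are affine in $\lambda$ with coefficients of bit-length at most $c$. Its characteristic polynomial is $\det(tI-Q(\lambda))=\sum_{k=0}^{n}(-1)^k e_k(\lambda)\,t^{n-k}$, where $e_k(\lambda)$ is the sum of the $k\times k$ principal minors of $Q(\lambda)$; hence each $e_k$ is a polynomial in $\lambda_1,\dots,\lambda_{\ell}$ of degree at most $k\le n$ with coefficients of bit-length $\poly(c,n)$. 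Since $Q(\lambda)\succeq 0$ exactly when all its eigenvalues are nonnegative, i.e. when $e_1(\lambda)\ge 0,\dots,e_n(\lambda)\ge 0$, the feasible region of \eqref{eq:Porkolab-thm} is precisely the semialgebraic set $F=\{\lambda\in\mathbb{R}^{\ell}:e_k(\lambda)\ge 0,\ k=1,\dots,n\}$, cut out by $n$ inequalities of degree at most $n$ whose coefficients have bit-length $L:=\poly(c,n)$.

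\textbf{Step 2 (reducing the number of parameters).} The image $V:=Q_0+\mathrm{span}\{Q_1,\dots,Q_{\ell}\}$ is an affine subspace of $\MS^n$ of dimension $v\le\min\{\ell,\binom{n+1}{2}\}$. Choosing a subset $\{Q_{i_1},\dots,Q_{i_v}\}$ that spans the direction space of $V$ and expressing the remaining $Q_i$ as rational combinations of them (Gaussian elimination on $O(n^2)$-dimensional vectors with entries of bit-length $c$) rewrites $F$ in $v$ new variables, enlarging the description bit-length only by a $\poly(c,n)$ factor; any bound on the reduced coordinates pulls back to a bound on the original $\lambda$ with at most a $\poly(c,n)$ additive blow-up in $\log\|\lambda\|$. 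This is exactly what produces the $\min(\ell,n^2)$ appearing in the exponent, so from now on we may assume $F\subseteq\mathbb{R}^{v}$ with $v=\min\{\ell,\binom{n+1}{2}\}$.

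\textbf{Step 3 (a small point in a nonempty semialgebraic set, and assembly).} Now apply the standard quantitative sample-point estimate from effective real algebraic geometry: a nonempty semialgebraic set in $\mathbb{R}^{v}$ defined by $s$ polynomials of degree at most $D$ with integer coefficients of bit-length at most $L$ contains a point whose coordinates are real algebraic numbers of degree $D^{O(v)}$ and of absolute value at most $2^{(L+\log s)\,D^{O(v)}}$ (Renegar; Basu--Pollack--Roy; Grigoriev--Vorobjov). The usual route: intersect $F$ with a huge ball and replace each $e_k\ge 0$ by $e_k\ge-\varepsilon$ with $\varepsilon$ infinitesimal, obtaining a bounded full-dimensional set, take a point minimizing a generic linear function on it; such a point is an isolated solution of a polynomial system of B\'ezout number $D^{O(v)}$, and Cauchy/Mahler-measure root bounds together with coefficient size $L$ give the stated height; letting $\varepsilon\to 0$ along the semialgebraically continuous family of witnesses lands a point in $F$ itself. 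Plugging $D=n$, $s=n$, $L=\poly(c,n)$, and $v=\min\{\ell,\binom{n+1}{2}\}$ yields $\log\|\lambda\|\le(L+\log n)\,n^{O(v)}=c\cdot n^{O(\min(\ell,n^2))}$, which is the claim.

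\textbf{Main obstacle.} Steps 1--2 are routine linear and commutative algebra; the entire content is Step 3 --- converting ``the system of polynomial inequalities is feasible'' into an explicit height bound on a witness. Done honestly this requires either importing the effective real-algebraic-geometry machinery as a black box, or reproving the relevant piece (infinitesimal deformation to isolated critical points of a generic objective, B\'ezout-type degree control on those critical points, and root-separation bounds turning degree and coefficient size into an absolute-value bound). An arguably more self-contained alternative, specific to spectrahedra and exploiting convexity of $F$, is to pass to the relative interior of a minimal face of $F$, where $\ker Q(\lambda)$ is locally constant, restrict to the linear slice annihilating that kernel to obtain a smaller linear matrix inequality together with linear equalities, and recurse on $n$; each of the at most $n$ rounds solves a determinantal system of degree at most $n$ in at most $\min(\ell,O(n^2))$ variables, and accumulating the bit-size growth over the rounds reproduces the same bound. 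Either way, the quantitative real-root estimate is the crux.
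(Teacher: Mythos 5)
This statement appears in the paper as a theorem cited from Porkolab and Khachiyan, with no proof given; the paper uses it purely as a black box inside \cref{th:porkolab-invariant-SDP+LP}. So there is no proof in the paper against which to compare your attempt.

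As to the merits of your sketch: the high-level structure is the right one and matches the methodology of the original reference. Step 1 is correct -- the characterization $Q(\lambda)\succeq 0 \iff e_1(\lambda)\ge 0,\dots,e_n(\lambda)\ge 0$ does hold (the nontrivial direction follows by evaluating the characteristic polynomial at a putative negative eigenvalue and noting every term in the resulting alternating sum has the same sign), and the bit-length of the coefficients of the $e_k$ is indeed $O(cn\log n)$, linear in $c$, which is what lets the final bound take the shape $c\cdot n^{O(\cdot)}$ rather than $\poly(c)\cdot n^{O(\cdot)}$. Step 2 is also fine: restricting to a spanning subset $\{Q_{i_1},\dots,Q_{i_v}\}$ of the direction space, setting the remaining coordinates to $0$, and solving in $\mathbb{R}^v$ with $v\le\min\{\ell,\binom{n+1}{2}\}$ is legitimate because any point of the original spectrahedron lifts to such a reduced solution; this is precisely what puts $\min(\ell,n^2)$ in the exponent. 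However, you have correctly identified that essentially all of the mathematical content is in Step 3, and there you do not actually prove anything -- you name the quantitative sample-point theorem and two strategies (infinitesimal perturbation to an isolated critical point plus B\'ezout and root-height bounds, or a facial-reduction recursion on spectrahedra) without carrying either out. In particular the perturbation argument needs care (passing from the $\varepsilon$-relaxed set back to $F$ requires a semialgebraic continuity/specialization argument that is not trivial to make quantitative), and the facial-reduction alternative silently needs a bound on the growth of coefficient bit size across the $n$ recursion levels. So the proposal is a correct roadmap but not a proof; given that the paper itself treats this result as an external citation, your treatment is at the appropriate level of detail for the paper's purposes, but if you intended a self-contained proof, the real-algebraic-geometry estimate in Step 3 would have to be established rather than invoked.
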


\begin{proof}[Proof of \cref{th:porkolab-invariant-SDP+LP}]
    Let $a=(a_1,\dots,a_{k_2})\in \mathbb R^{k_2}$ and $b=(b_1,\dots,b_{k_3})\in \mathbb R^{k_3}$ and define $y \in \mathbb R^{k_2+k_3}$ as the vector $y=(a,b)^\top$. We encode conditions \eqref{eq:lemma-SDP+LP-formulation} in an SDP feasibility problem of the form in \ref{eq:Porkolab-thm} as follows.
    First, define the block-diagonal matrices $F_0,F_1,\dots,F_{k_2+k_3}$ of size $M=N + 2k_1$ as
    \begin{align*}
        &F_i = diag(Q_i,B_{1,i}, \dots,B_{N,i}) \in \MS^{N} \quad \text{for } i=1,\dots,k_2 \\
        &F_{k_2+j} = diag(0_{N \times N}, B_{1,k_2+j},\dots, B_{N,k_2+j}) \in \MS^{M} \quad \text{for } j=1,\dots,k_3
    \end{align*}
    for $B_{t,i}\in \mathbb R^{2\times 2}$ given by
    \begin{align*}
        B_{t,i} = \begin{pmatrix}
        0 & A_{t,i} \\
        A_{t,i} & 0
    \end{pmatrix}    \quad \text{for } t=1\dots,N \text{ and } i=1,\dots,k_2+k_3
    \end{align*}
    and 
    \begin{align*}
        F_0 = diag( 0_{N \times N},C_1,\dots,C_{k_1}), \quad \text{where, }    
        C_t=\begin{pmatrix}
            0 &  -c_t\\ 
            -c_t & 0
        \end{pmatrix} \quad \text{for } t=1\dots,k_1,   
    \end{align*}
    Then, $y$ satisfies 
    \begin{align} \label{eq:lemma-SDP+LP-formulation-proof}
        F_0 + \sum_{i=1}^{k_2+k_3}y_iF_i \succeq 0, \qquad y \in \mathbb{R}^{k_2+k_3}
    \end{align}
    if and only if both constraints \eqref{eq:lemma-SDP+LP-formulation} are satisfied. Indeed, consider a solution to \cref{eq:lemma-SDP+LP-formulation-proof}. Recall that a block diagonal matrix is PSD if and only if every diagonal block is also PSD. In particular, observe that the upper left $N \times N$-block of \cref{eq:lemma-SDP+LP-formulation-proof} corresponds exactly to $0+\sum_{i=0}^{k_2} a_i Q_i\succeq 0$, thus the PSD condition in \cref{eq:lemma-SDP+LP-formulation} holds. In addition, each $2\times 2$ remaining block is such that
    \begin{equation*}
        \begin{pmatrix}
            0 & (Ay)_j -c_j \\
            (Ay)_j -c_j  & 0
        \end{pmatrix} \succeq 0 \Longleftrightarrow (Ay)_j -c_j =0, \qquad \text{for } j=1,\dots,k_2+k_3
    \end{equation*}
    which shows the equivalence of the two formulations. The other direction, namely that a solution to \eqref{eq:lemma-SDP+LP-formulation} is also a solution to \cref{eq:lemma-SDP+LP-formulation-proof}, holds by construction.

    Since $k_2,k_3$ are fixed constants, the number $\min(k_2+k_3, n^2)$ is also a constant. By \cref{th:Porkolab_SDP_complexity}, there exists a feasible solution $\bar y$ with entries of magnitude upper bounded by $2^{\poly(\ell)}$, where $\ell$ is the total bit size of $Q_1,\dots,Q_{k_2},A,c$.
\end{proof}

We observe that the system in (\ref{proof-1}) can be viewed as a special case of the system described in (\ref{eq:lemma-SDP+LP-formulation}), where the parameters $k_1$, $k_2$, and $k_3$ are all in $O(n^d)$. This is because both the matrix $Q$ and the polynomials $h_i$ are indexed by monomials of degree at most $d$. Due to this polynomial-size dependence on $n^d$, Lemma \ref{th:porkolab-invariant-SDP+LP} does not immediately imply property (\textsc{P}). In the following, we will leverage the symmetry present in the system to address this obstacle and develop an approach that exploits this structure effectively.

\subsection{\sos\ automatability for systems of invariant polynomials}\label{sec:invariant-constraints}

We first address the case of invariant polynomials. Let $\mathcal F$ be a \GB basis and suppose there exists an $\sos$ proof of a polynomial $r\in \mathbb R[x_1,\dots,x_n]$ from the set $\mathcal P$ of equality constraints and from $\mathcal F$. \cref{th:O(1)_symmetric_constraint_sos-proofs} shows that if the polynomial $r$ and the elements of $\mathcal P$ are all invariant under the action of the direct product of symmetric groups, then there exists an $\sos$ proof of $r$ with bounded coefficients.

\begin{theorem}\label{th:O(1)_symmetric_constraint_sos-proofs}
    Let $m, d, t \in O(1)$ be fixed positive integers. Consider $\mathcal P = \{p_1 = 0, \ldots, p_m = 0\}$ set of polynomial equality constraints and let $\mathcal F=\{f_1,\dots,f_s\}$ be a Gröbner basis of $\langle \mathcal{F} \rangle$ in \texttt{grlex} order. Consider $G=S_{n_1}\times \dots \times S_{n_t}$, with $\sum_{i=1}^t n_i = n$.
    Suppose that $\mathcal F$ is a $G$-invariant system and every $p_i \in \mathcal P$ is a $G$-invariant polynomial.
    If there exists a degree-$2d$ $\sos$ proof of a $G$-invariant polynomial $r\in \mathbb R[x_1,\dots,x_n]_{2d}$ from $\mathcal P \cup \mathcal F$, then there exists a degree-$2d$ $\sos$ proof of $r$ with coefficients bounded by $2^{\poly(n^d)}$.
\end{theorem}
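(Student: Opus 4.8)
The plan is to combine the three structural ingredients developed in the paper: the Gröbner-basis reduction (\cref{th:reduction}), the averaging over the group via the Reynolds operator (\cref{th:invariant-system-averaged-sos} together with \cref{prop:invariant-sos}), and the orbit-counting bound (\cref{th:orbits_pairs_indices_under_symmetric_action} and \cref{th:invariant-matrix-finite-decomposition}), and then feed the resulting constant-dimensional feasibility problem into \cref{th:porkolab-invariant-SDP+LP}. Concretely, suppose a degree-$2d$ $\sos$ proof $r = \sigma + \sum_{i=1}^m h_i p_i + \sum_{i=1}^s q_i f_i$ exists. First I would apply \cref{th:invariant-system-averaged-sos} to replace it by a proof in which $\sigma$ and each $h_i$ are $G$-invariant (the $\mathcal F$-multipliers need not be controlled, since we will eliminate them via reduction). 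Then I would invoke the Reduction part of \cref{th:reduction} to pass to the identity $\overline r = \overline\sigma + \sum_{i=1}^m \overline{h_i p_i}$ modulo $\langle \mathcal F\rangle$, with all degrees still at most $2d$.

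The crux is to argue that \emph{this} reduced identity can be encoded as a feasibility system of the shape \eqref{eq:lemma-SDP+LP-formulation} with $k_1,k_2,k_3 = O(1)$. For the $\sos$ part, by \cref{prop:invariant-sos} we may write $\sigma = \langle \overline Q, \x_d\x_d^\top\rangle$ with $\overline Q \succeq 0$ and $G$-invariant, and by part (1) of \cref{th:invariant-matrix-finite-decomposition} we have $\overline Q = \sum_{j=1}^\ell a_j Q_j$ for a \emph{constant} number $\ell$ of fixed $0/1$ matrices $Q_j$ (the orbit indicator matrices), so the PSD constraint becomes $\sum_j a_j Q_j \succeq 0$ with $\ell = O(1)$ unknowns $a_j$. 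For each equality multiplier $h_i$: since $h_i$ is $G$-invariant of degree $\le 2d-\deg(p_i)$, part (2) of \cref{th:invariant-matrix-finite-decomposition} gives $h_i = \sum_{j} b_{i,j} h_{i,j}$ with a constant number of fixed $0/1$ polynomials $h_{i,j}$. Since $m = O(1)$, the total number of scalar unknowns $b$ is $O(1)$ as well. The identity $\overline r = \overline\sigma + \sum_i \overline{h_i p_i}$ is linear in $(a,b)$ once we fix the finite bases, and equating coefficients (after reduction, which is a fixed linear map) gives finitely many linear equations $A(a,b)^\top = c$; here $k_1$ is bounded by the number of monomials that can appear, but crucially one argues — again using orbit counting, since the whole identity is among $G$-invariant polynomials — that only $O(1)$ many \emph{orbit-representative} equations are independent, so we may take $k_1 = O(1)$. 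The entries of $Q_j$, $A$, $c$ are obtained from the fixed $0/1$ bases, the reduction by $\mathcal F$, and the (polynomial bit size) coefficients of $r$, $\mathcal P$, $\mathcal F$, so their total bit size $\ell$ is $\poly(n^d)$.

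Having cast the problem in the form of \eqref{eq:lemma-SDP+LP-formulation} with $k_1,k_2,k_3 = O(1)$, \cref{th:porkolab-invariant-SDP+LP} yields a feasible solution $(a,b)$ of magnitude $2^{\poly(n^d)}$. This gives $\overline Q = \sum_j a_j Q_j \succeq 0$ and the $h_i = \sum_j b_{i,j} h_{i,j}$ all with entries and coefficients bounded by $2^{\poly(n^d)}$ (the $0/1$ bases contribute only a $\poly(n^d)$ factor from the number of terms). We recover a reduced $\sos$ identity $\overline r = \overline{\langle \overline Q,\x_d\x_d^\top\rangle} + \sum_i \overline{h_i p_i}$ with bounded data; \cref{th:bound-coeff} converts the bounded Gram matrix $\overline Q$ into bounded square roots, so the $\sos$ polynomial itself has bounded coefficients. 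Finally, the Reconstruction part of \cref{th:reduction} lifts this back to a genuine degree-$2d$ $\sos$ proof $r = \sigma + \sum_i h_i p_i + \sum_i q_i f_i$ of size $\poly$ in the data, hence with coefficients bounded by $2^{\poly(n^d)}$, as required.

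\textbf{Main obstacle.} I expect the delicate point to be the bookkeeping that the linear system $A(a,b)^\top = c$ genuinely has $k_1 = O(1)$ rows: a priori the coefficient-matching identity lives in $\mathbb R[x_1,\dots,x_n]_{2d}$, which has $n^{\Theta(d)}$ coordinates, so one must carefully use that both sides are $G$-invariant and invoke \cref{th:orbits_pairs_indices_under_symmetric_action} (for $n \ge 2d$) to collapse these to a constant number of orbit-indexed equations, while simultaneously checking that the reduction map by the $G$-invariant Gröbner basis $\mathcal F$ respects this orbit structure. One also has to verify that the bit size of the data defining $A$ and $c$ — which involves the Gröbner reduction — stays $\poly(n^d)$; this is exactly the size claim in the Reduction part of \cref{th:reduction}. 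Everything else is routine assembly of the cited lemmas.
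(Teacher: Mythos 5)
Your proposal is correct and follows essentially the same chain of lemmas and in the same order as the paper's proof (Reynolds averaging, Gröbner reduction, orbit decomposition of the invariant Gram matrix and multipliers, encoding as a constant-variable SDP-plus-LP feasibility system, Porkolab--Khachiyan bound, then \cref{th:bound-coeff} and Reconstruction). Your worry in the ``Main obstacle'' paragraph is a genuine and well-spotted subtlety — the paper asserts $k_1 = O(1)$ rather tersely — and your resolution (both sides of the reduced identity are $G$-invariant since grlex is a $G$-compatible order and $\mathcal F$ is $G$-invariant, so only orbit-representative coefficient equations are independent, bounded via \cref{th:orbits_pairs_indices_under_symmetric_action}) is exactly the right justification.
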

\begin{proof}
    By \cref{th:invariant-system-averaged-sos}, there exists a degree-$2d$ $\sos$ proof of the form
    \begin{equation}\label{eq:averaged_refutation-proof}
        r = \rho + \sum_{i=1}^m \lambda_i p_i + \sum_{i=1}^s \tilde{a_i}f_i,
    \end{equation}
    where both $\rho \in \Sigma$ and the $\lambda_i$'s are $G$-invariant.
    Next, by reducing \cref{eq:averaged_refutation-proof} modulo $\mathcal{F}$ (see \cref{th:reduction}), we obtain a solution for the following system 
     \begin{align} \label{s-1}
         \overline{r}= \overline{\rho} + \sum_{i=1}^m\overline{\lambda_ip_i},  \qquad \rho \in \Sigma_{2d}, \qquad \rho,\lambda_i \ \text{ are } G\text{-invariant for }i\in [m].  
     \end{align}
     We will show that there exists a solution to the system (\ref{s-1}) such that the coefficients are bounded by $2^{\poly(n^d)}$. This concludes the proof of the theorem, in view of \cref{th:reduction}.  

   Since $\rho$ is a $G$-invariant sum of squares, by \cref{prop:invariant-sos}, there exists a $G$-invariant matrix $P$ with $P\succeq 0$, such that $\rho = \langle P, \x_d \x_d^\top \rangle$. Then, by \cref{th:invariant-matrix-finite-decomposition}, it follows that there exists a constant $\ell_0 \in \mathbb{N}$, scalars $\gamma_{0,1} \ldots, \gamma_{0,\ell_0} \in \mathbb{R}$ and $0/1$ symmetric matrices $Q_i \in \MS^{\omega_n^d}$ such that $P = \sum_{i=1}^{\ell_0} \gamma_{0,i} Q_i$. Similarly, since the polynomials $\lambda_j$ are $G$-invariant for $j\in [m]$, by Proposition \ref{th:invariant-matrix-finite-decomposition}, there exists constants $\ell_1, \dots, \ell_m$ such that, for $j\in [m]$, $\lambda_j = \sum_{i=1}^{\ell_j} \gamma_{j,i} q_i$, where the $q_i$'s  are fixed polynomials with degree at most $\deg(\lambda_j)$ whose coefficients are $0$ or 1, and $\gamma_{j,i}$ are real scalars. 
   Therefore, for $j\in [m]$ and $i\in[\ell_j]$, we have $\lambda_jp_j= \sum_{i=1}^{\ell_j} \gamma_{j,i}(q_ip_j)$. We now define, for $j\in [m]$ and $i\in[\ell_j]$, the matrix $Q_{i,j}\in \MS^{\omega_n^d}$ such that $q_ip_j=\langle Q_{ij}, \x_d\x_d^T\rangle$. We observe that $Q_{i,j}$ can be naturally constructed so that the entries have bit size $\poly(n^d)$.
    Therefore, system (\ref{s-1}) can be rewritten as 
        \begin{align}
           & \overline{r} = \langle \sum_{i=1}^{\ell_0} \gamma_{0,i} Q_i, \overline{\x_d\x_d^\top}\rangle + \sum_{j=1}^m \sum_{i=1}^{\ell_j} \langle \gamma_{j,i} Q_{i,j}, \overline{\x_d\x_d^\top}\rangle, \label{new-recuced-1} \\
           &  \sum_{i=1}^{\ell_0} \gamma_{0,i} Q_i \succeq 0.\label{new-reduced-2}
        \end{align}
    Here, $\overline{\x_d\x_d^\top}$ denotes the matrix obtained by reducing entry-wise the matrix $\x_d\x_d^\top$ modulo $\mathcal{F}$. It remains to show that there exists a feasible solution with $|\gamma_{j,i}|< 2^{\poly(n^d)}$ (for $j=0,1,\dots, m$ and $i\in [\ell_j])$. For this, we apply \cref{th:porkolab-invariant-SDP+LP}. Observe that system \eqref{new-recuced-1}-\eqref{new-reduced-2} takes the form
       \begin{align*}
            \sum_{i=1}^{k_2} a_iQ_i \succeq 0, \quad
            A\Big(\begin{array}{c}
                 a \\
                 b
            \end{array}\Big) =c, \quad
            a\in \mathbb{R}^{k_2},\ b\in \mathbb{R}^{k_3}
        \end{align*}
    as the system in \cref{th:porkolab-invariant-SDP+LP}, where the vectors of variables $a$ and $b$ correspond to the vectors formed, respectively, by the variables $\gamma_{0,i}$ (for $i\in [\ell_0])$ and $\gamma_{j,i}$ (for $j \in [m]$, $i \in [\ell_j])$. The matrix $A$ is given by the linear equations obtained by equating the coefficients in \eqref{new-recuced-1}. Thus, $k_1, k_2, k_3$ are constant, as the number of variables and the size of the matrices are constant. The vector $c$ corresponds to the vector of coefficients of $\overline{r}$, and thus it has bit size polynomial in $n$. The bit size of the corresponding matrix $A$ is polynomial in $n$ as the bit sizes of the matrices $Q_i$ and $Q_{i,j}$ are also polynomial is $n$. Moreover, the system (\ref{new-recuced-1})-(\ref{new-reduced-2}) is feasible as the system (\ref{s-1}) is feasible. Therefore, by \cref{th:porkolab-invariant-SDP+LP}, there exists a solution with $|\gamma_{j,i}|< 2^{\poly(n^d)}$ (for $j=0,1,\dots, m$ and $i\in [\ell_j])$. Thus, by \cref{th:bound-coeff}, we obtain a feasible solution to system~\eqref{s-1} with coefficients bounded by $2^{\poly(n^d)}$. The result follows from \cref{th:reduction}.
\end{proof}

\subsection{Automatability of $\sos$ refutations for invariant polynomial systems}

In this section, we extend our analysis to a broader class of invariant systems. This generalizes the setting of \cref{sec:invariant-constraints}, which focused exclusively on invariant constraints. Rather than requiring pointwise invariance, here we allow the polynomials to be permuted among themselves under the group action. This broader scope introduces new technical challenges. Unlike \cref{th:O(1)_symmetric_constraint_sos-proofs}, which applies to $\sos$ proofs of symmetric polynomials, \cref{th:invariant_systems_refutation} applies only to refutations, and restricts to the finite domain setting, where variables range over a finite set.
Despite these limitations, we show that under symmetry assumptions, even for this more general class of unsatisfiable constraints, we can bound the coefficients appearing in $\sos$ refutations by $2^{\poly(n^d)}$. A key ingredient in our proof is reduction to normal form due to \cref{th:normal_form_archimedean_systems}.

\begin{theorem}\label{th:invariant_systems_refutation}
    Let $d,k,t,z \in O(1)$ be positive integers. Let $G = S_{n_1} \times \ldots \times S_{n_t}$, with $\sum_{i=1}^t n_i = n$. Let $\mathcal P = \{p_1 = 0, \ldots, p_m = 0\}$ be a $G$-invariant polynomial system such that $|\mathcal{P}/G|=z$ and let $\mathcal{D}$ be a finite domain constraint set of size $2k$. If there exists a degree-$2d$ $\sos$ refutation of $\mathcal{P \cup D}$, then there exists a degree-$2(d + k - 1)$ $\sos$ refutation of $\mathcal{P \cup D}$ with coefficients bounded by $2^{\poly(n^d)}$.
\end{theorem}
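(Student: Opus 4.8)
The plan is to combine the normal-form theorem for Archimedean systems (\cref{th:normal_form_archimedean_systems}) with the Reynolds-averaging and orbit-counting tools of \cref{sec:invariant-sos-and-finite-orbits}, producing a feasibility problem whose dimension and number of linear constraints are independent of $n$, and then to invoke \cref{th:porkolab-invariant-SDP+LP}. Assume $d\ge k$. Since $(\mathcal D,\emptyset)$ is degree-$2k$ Archimedean by \cref{ex:finite-domain}, \cref{th:normal_form_archimedean_systems} applied with $\mathcal R=\mathcal P$ turns the hypothesized degree-$2d$ refutation of $\mathcal P\cup\mathcal D$ into a refutation
\begin{equation*}
    -1=\sigma+\sum_{p\in\mathcal P}a_p\,p^2+\sum_{D\in\mathcal D}h_D\,D,
\end{equation*}
with $\sigma\in\Sigma$, scalars $a_p\in\mathbb R$, and every term of degree at most $2(d+k-1)$; in particular $a_p=0$ whenever $\deg p>d+k-1$.

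I would then symmetrize. The polynomial $-1$ is $G$-invariant, and the sets $\{p^2:p\in\mathcal P\}$ and $\mathcal D$ are $G$-invariant since $\mathcal P$ and $\mathcal D$ are; so \cref{th:invariant-system-averaged-sos}, applied with $f=-1$ and $\mathcal R=\{p^2:p\in\mathcal P\}\cup\mathcal D$, produces a refutation of the same shape in which $\sigma$ is replaced by a $G$-invariant sum of squares $\tilde\sigma$ and each scalar $a_p$ by a scalar $\tilde a_p$ that is constant on the $G$-orbits of $\mathcal P$ (averaging scalars, which $G$ fixes, merely redistributes them over orbits). Grouping by orbit and setting $P_o:=\sum_{p\in O_o}p^2$ — a $G$-invariant polynomial, of degree at most $2(d+k-1)$ whenever the corresponding scalar survives — the constraint part becomes $\sum_{o=1}^{z}\tilde a_oP_o$ with only $z=O(1)$ scalar unknowns. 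To remove the $n$ polynomial multipliers $h_D$, I would use that $\mathcal D$ is a Gröbner basis in \grlex order (its leading terms $x_i^{2k}$ are pairwise coprime): reducing the identity above modulo $\mathcal D$ annihilates every $h_DD$ term, and since $\overline{-1}=-1$ it leaves $-1=\overline{\tilde\sigma}+\sum_{o=1}^{z}\tilde a_o\,\overline{P_o}$.

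Next I would cast the existence of such an identity, with $\tilde\sigma$ a $G$-invariant sum of squares of degree $2(d+k-1)$, in the form required by \cref{th:porkolab-invariant-SDP+LP}. By \cref{prop:invariant-sos} such a $\tilde\sigma$ has a $G$-invariant PSD Gram matrix $\overline Q$, which by \cref{th:invariant-matrix-finite-decomposition} decomposes as $\overline Q=\sum_{i=1}^{\ell_0}\gamma_iQ_i$ for fixed symmetric $0/1$ matrices $Q_i$ with $\ell_0=O(1)$; this is where \cref{th:orbits_pairs_indices_under_symmetric_action} enters, bounding the number of $G$-orbits of pairs of bounded-degree exponents independently of $n$. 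The equation $-1=\langle\sum_i\gamma_iQ_i,\overline{\x_{d+k-1}\x_{d+k-1}^\top}\rangle+\sum_o\tilde a_o\overline{P_o}$ is an identity between $G$-invariant polynomials of degree at most $2(d+k-1)$, hence — again by \cref{th:orbits_pairs_indices_under_symmetric_action} — equivalent to $O(1)$ scalar equations, one per $G$-orbit of monomials. The resulting system thus has $O(1)$ real unknowns $(\gamma_1,\dots,\gamma_{\ell_0},\tilde a_1,\dots,\tilde a_z)$, a PSD block of size $\omega_n^{d+k-1}$ built from $0/1$ matrices, and $O(1)$ linear equations, and its data have bit size $\poly(n)$ (the entries of $\overline{\x_{d+k-1}\x_{d+k-1}^\top}$ and the coefficients of the $\overline{P_o}$ are bounded-degree polynomials in the $\rho_j$, and an orbit sum $P_o$ has coefficients of bit size $\log|G|+\poly(n)=\poly(n)$, all inflated only polynomially by \grlex-reduction). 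It is feasible, with $(\tilde\sigma,\{\tilde a_p\})$ as witness, so \cref{th:porkolab-invariant-SDP+LP} gives a solution with $|\gamma_i|,|\tilde a_o|\le 2^{\poly(n)}\le 2^{\poly(n^d)}$. Reconstruction completes the argument: $\overline Q=\sum_i\gamma_iQ_i\succeq 0$ now has entries bounded by $2^{\poly(n^d)}$, so $\sigma:=\langle\overline Q,\x_{d+k-1}\x_{d+k-1}^\top\rangle$ is a sum of squares whose square decomposition has coefficients bounded by $2^{\poly(n^d)}$ by \cref{th:bound-coeff}, and $\sum_o\tilde a_oP_o$ has coefficients bounded by $2^{\poly(n^d)}$; applying the reconstruction part of \cref{th:reduction} (with Gröbner basis $\mathcal D$ and equality constraints $P_1,\dots,P_z$) to $-1=\overline\sigma+\sum_o\tilde a_o\overline{P_o}$ produces multipliers $q_D$ with $\deg(q_DD)\le 2(d+k-1)$ and size $\poly(n)$, yielding the desired degree-$2(d+k-1)$ refutation $-1=\sigma+\sum_{p\in\mathcal P}\tilde a_p\,p^2+\sum_{D\in\mathcal D}q_D\,D$ with all coefficients bounded by $2^{\poly(n^d)}$.

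I expect the main obstacle to be taming the $\Theta(n^d)$ coefficients carried by the polynomial multipliers $h_D$ of the finite-domain constraints. Reynolds-averaging alone only makes the family $\{h_D\}$ $G$-equivariant rather than $G$-invariant, so it cannot collapse these unknowns to a constant number; a naive SDP would then have too many variables and too many linear constraints for \cref{th:porkolab-invariant-SDP+LP}. The fix is to erase the $h_DD$ terms altogether by reducing modulo the Gröbner basis $\mathcal D$ and to reintroduce them only at the end through \cref{th:reduction} — combined with the normal form (which turns the $\mathcal P$-multipliers into scalars) and the orbit count (which compresses both the Gram matrix and the coefficient-matching system to constant size), this is what renders the final feasibility problem genuinely constant-dimensional. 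A secondary point meriting care is checking that every polynomial surviving the reduction is $G$-invariant of bounded degree, which is precisely what permits passing to one linear equation per monomial orbit.
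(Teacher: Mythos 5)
Your proposal is correct and follows essentially the same route as the paper: normal form via \cref{th:normal_form_archimedean_systems}, Reynolds symmetrization, orbit-constancy of the scalar multipliers $\tilde a_p$, reduction modulo the Gröbner basis $\mathcal D$, orbit-compressed SDP via \cref{prop:invariant-sos}, \cref{th:invariant-matrix-finite-decomposition}, and \cref{th:porkolab-invariant-SDP+LP}, then reconstruction via \cref{th:reduction} and \cref{th:bound-coeff}. The one place you are terser than the paper is the claim that Reynolds averaging makes the scalars $\tilde a_p$ constant on $G$-orbits of $\mathcal P$ (the paper gives a careful stabilizer-counting argument to establish this), and you write $\x_{d+k-1}$ where the paper, by a mild notational abuse, keeps writing $\x_d$ for the Gram basis — your usage is the more precise one.
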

\begin{proof}
    Since $\mathcal{D} = \{D_i = 0\}_{i \in [n]}$ is $2k$-Archimedean (see \cref{ex:finite-domain}), by \cref{th:normal_form_archimedean_systems} there exists a degree-$2(d+k-1)$ $\sos$ refutation of the form
    \begin{align}\label{proof-last-appendix}
        -1 = \rho + \sum_{i=1}^m c_i p_i^2 + \sum_{i=1}^{n}r_iD_i
    \end{align}
    where $\rho \in \Sigma$ has degree $2(d+k-1)$, $c_i\in \mathbb{R}$, and $r_i$ are polynomials of degree  at most $2d-2$. Let $\mathcal{O}_1, \dots, \mathcal{O}_z$ be the orbits of $\mathcal{P}$ under the action of $G$.  By applying the Reynolds operator at both sides of the equality we obtain a proof of the form
     \begin{align}\label{appendix-avg-sos-ref-last-th-first}
        -1 = \sigma + \sum_{i=1}^m \tilde{c_i} p_{i}^{2} + \sum_{i=1}^{n} \tilde{r_i}D_i
    \end{align}
    where $\sigma$ is $G$-invariant. 
    We note that the number of different coefficients $\tilde{c_i}$ depends only on the number $z$ of $G$-orbits of $\mathcal{P}$. Indeed, consider the $G$-orbit $[p_i] = \{p_{i_1}, \ldots, p_{i_w}\} \in \mathcal{P}/G$ represented by polynomial $p_i \in \mathcal{P}$. We proceed to demonstrate that, in \cref{appendix-avg-sos-ref-last-th-first}, $\tilde{c}_{j_1} = \tilde{c}_{j_2}$ for all $j_1, j_2 \in \{i_1, \ldots, i_w\}$. 
    Let $G_{v,i}$ with $v \in \{i_1, \ldots, i_w\}$ be the subset of $G$ such that for every $g \in G_{v,i}$ we have that $g \cdot p_v = p_i$. Further, we observe that $\sum_{v \in \{i_1, \ldots, i_{w}\}} |G_{v,i}| = |G|$. Then, as a result of the averaging in \cref{appendix-avg-sos-ref-last-th-first}, we obtain
    \begin{align*}
        \tilde{c}_i = \frac{1}{|G|} \sum_{v \in \{i_1, \ldots,i_{w}\}} \sum_{g \in G_{v,i}} c_v = \frac{1}{|G|} \sum_{v \in \{i_1, \ldots, i_{w}\}} |G_{v,i}| c_v.
    \end{align*}
    Therefore, to show that $\tilde{c}_{j_1} = \tilde{c}_{j_2}$ for all $j_1, j_2 \in \{i_1, \ldots, i_{w}\}$ it suffices to show that $|G_{v_1,i}| = |G_{v_2,i}|$ for every $v_1, v_2 \in \{i_1, \ldots, i_{w}\}$, as this implies that $|G_{v,i}| = |G|/w$ for every $v \in \{i_1, \ldots, i_{w} \}$. This, in turn, implies that that $|G_{v,j}| = |G|/ w$ for every $v$ and $j$ in $\{ i_1, \ldots, i_w\}$, and thus that $\tilde{c}_{j_1} = \tilde{c}_{j_2}$ (note that, by the Orbit-Stabilizer Theorem applied to $p_i$, it follows that $|G|$ is divisible by $w$).
    Let $v_1,v_2 \in \{i_1, \ldots, i_{w}\}$, we will demonstrate that there exists a one-to-one correspondence between $G_{v_1, i}$ and $G_{v_2, i}$. Indeed, let $\bar{g} \in G_{v_2,i}$ be a permutation such that $\bar{g}(v_1) = v_2$ and $\bar{g}(v_2) = i$. Note that such a permutation exists since $p_i$ and $p_{v_2}$ belong to the same $G$-orbit. Let $f:G_{v_1,i} \rightarrow G_{v_2,i}$ such that $f(g) = \bar{g} \circ g$, then $f(g) \in G_{v_2,i}$. Furthermore, since $\bar{g}$ and $g$ are both bijections of $\{i_1, \ldots, i_w \}$, it follows that also $f$ is a bijective function, thus $|G_{v_1,i}| = |G_{v_2,i}|$. We can conclude that $\tilde{c}_{j_1} = \tilde{c}_{j_2}$ for all $j_1,j_2 \in \{i_1, \ldots, i_w\}$ as argued earlier. Then, we have a refutation of the form
    \begin{align}\label{appendix-avg-sos-ref-last-th}
        -1 = \sigma + \sum_{i=1}^z \tilde{c_i} \Big(\sum_{p_i\in \mathcal{O}_i} p_i^2\Big) + \sum_{i=1}^{n} \tilde{r_i}D_i
    \end{align}
    Next, observe that $\mathcal{D}$ forms a \GB basis of $\langle \mathcal{D} \rangle$ with respect to any monomial order. Then, we can reduce \cref{appendix-avg-sos-ref-last-th} modulo $\mathcal{D}$ to obtain a feasible solution of the following system
    \begin{align}\label{system-1-appendix}
         -1 = \overline{\sigma} + \sum_{i=1}^z  \tilde{c_i} \Big(\overline{\sum_{p_i\in \mathcal{O}_i}p_i^2}\Big) \quad 
         \sigma \in \Sigma_{2(d+k-1)}, \quad \tilde{c_i}\in \mathbb{R} \text{ for } i\in [z] \quad
         \sigma \text{ is } G\text{-invariant.} 
    \end{align}
    Now, we show that there exists a solution to this system such that the coefficients of $\sigma$ and $\tilde{c_i}$ (for $i\in [z]$) are bounded by $2^{\poly(n^d)}$. This will conclude the proof in view of Lemma \ref{th:reduction}.

    Since $\sigma \in \Sigma$ is $G$-invariant, by \cref{prop:invariant-sos}, there exists a $G$-invariant matrix $P$ with $P\succeq 0$, such that $\sigma = \langle P, \x_d \x_d^\top \rangle$. Then, by \cref{th:invariant-matrix-finite-decomposition}, it follows that there exists a constant $\ell \in \mathbb N$, scalars $\gamma_{1} \ldots, \gamma_{\ell} \in \mathbb{R}$ and $0/1$ symmetric matrices $Q_i \in \MS^{\omega_n^d}$ such that $P = \sum_{i=1}^{\ell} \gamma_{i} Q_i$. We let $Q_1', \dots, Q_z'$ be symmetric matrices such that $\langle Q_i', \x_d\x_d^\top \rangle=\sum_{p_i\in \mathcal{O}_i}p_i^2$. It is clear that these matrices can be picked so that their entries can be encoded with $\poly(n^d)$ bits. Therefore, system (\ref{system-1-appendix}) can be rewritten as 
    \begin{align}
       & -1 = \langle \sum_{i=1}^{\ell} \gamma_{i} Q_i, \overline{\x_d\x_d^\top}\rangle + \sum_{j=1}^z  \langle \tilde{c_i} Q_{j}', \overline{\x_d\x_d^\top}\rangle \label{new-proof1-appendix},\\
       &  \sum_{i=1}^{\ell} \gamma_{i} Q_i \succeq 0. \label{new-proof2-appendix}
    \end{align}
    Here, $\overline{\x_d\x_d^\top}$ denotes the matrix obtained by reducing entry-wise the matrix $\x_d\x_d^\top$ by the \GB basis $\mathcal{D}$. It remains to show that there exists a feasible solution with $|\gamma_{i}|< 2^{\poly(n^d)}$ (for $i \in [\ell]$) and $|\tilde{c_i}|<2^{\poly(n^d)}$ (for $i \in [z]$). For this, we apply \cref{th:porkolab-invariant-SDP+LP}. We observe that system (\ref{new-proof1-appendix})-(\ref{new-proof2-appendix}) takes the form
   \begin{align*}
        \sum_{i=1}^{k_2} a_iQ_i \succeq 0, \quad
        A\Big(\begin{array}{c}
             a \\
             b
        \end{array}\Big) =c', \quad
        a\in \mathbb{R}^{k_2},\ b\in \mathbb{R}^{k_3}
    \end{align*}
    as the system in \cref{th:porkolab-invariant-SDP+LP}, where the vectors of variables $a$ and $b$ correspond to the vectors formed, respectively, by the variables $\gamma_{i}$ (for $i\in [\ell])$ and $\tilde{c_{i}}$ (for $i\in [z]$). The matrix $A$ is given by the linear equations obtained by equating the coefficients in \eqref{new-proof1-appendix}. 
    Thus, $k_1, k_2, k_3$ are constant, as the number of variables and the sizes of the matrices are constant. The vector $c'$ corresponds to the coefficient vector of polynomial $-1$, and thus it has bit size polynomial in $n$. The bit size of the corresponding matrix $A$ is polynomial in $n$ as the bit size of the matrices $Q_i$ is also polynomial in $n$. Moreover, the system (\ref{new-proof1-appendix})-(\ref{new-proof2-appendix}) is feasible as the system (\ref{system-1-appendix}) is feasible. Therefore, by \cref{th:porkolab-invariant-SDP+LP}, there exists a solution with $|\gamma_{i}|< 2^{\poly(n^d)}$ (for $i \in [\ell]$) and $|\tilde{c_i}|<2^{\poly(n^d)}$ (for $i\in [z]$). Thus, by \cref{th:bound-coeff}, we obtain a feasible solution to system \eqref{system-1-appendix} with coefficients bounded by $2^{\poly(n^d)}$. The result follows from \cref{th:reduction}.
\end{proof}

\section{Future directions}

Our results on the existence of small-coefficient $\sos$ proofs under symmetry assumptions suggest several promising themes for further investigation. A possible direction is to extend the requirement that $m=O(1)$ in \cref{th:O(1)_symmetric_constraint_sos-proofs} to settings where the number of constraints grows with $n$, and possibly $m = poly(n)$.

Another possible direction is to exploit the normal-form reductions more broadly. While \cref{rem:normal_form_SDP_dimension_reduction} highlights potential computational benefits of the normal form in Archimedean systems,
one can apply these insights to degree-automatability questions for combinatorial instances.
For example, one could analyze how the structure from \cref{th:normal_form_archimedean_systems} (normal forms for refutations in Archimedean pairs) influences the $\sos$ bit size needed to refute instances of \textsc{3-LIN(2)}, or other constraint satisfaction problems that require super-constant $\sos$ degree for refutation. Such analysis would clarify the power and limits of the $\sos$ hierarchy and SDP relaxations by determining whether canonical representations lead to new automatability criteria or fundamental combinatorial barriers.

\bibliography{references}

\end{document}